\newtheorem{theorem}{Theorem}
\newtheorem{corollary}{Corollary}
\newtheorem{assumption}{Assumption}
\begin{document}

\setlength{\parskip}{0pt}
\setlength{\parindent}{0pt}
\vspace*{-2cm}

\begin{center}
{\LARGE Learning control variables and instruments\\
for causal analysis in observational data}
\end{center}
\thispagestyle{empty}

\begin{center}
{\large Nicolas Apfel}\\
{\footnotesize University of Innsbruck, Universitaetsstrasse 15, 6020 Innsbruck\\
\texttt{nicolas.apfel@uibk.ac.at}}\\[0.5em]

{\large Julia Hatamyar}\\
{\footnotesize Centre for Health Economics, Heslington, York, YO10 5DD, United Kingdom\\
\texttt{julia.hatamyar@york.ac.uk}}\\[0.5em]

{\large Martin Huber}\\
{\footnotesize University of Fribourg, Bd.\ de P\'{e}rolles 90, 1700 Fribourg, Switzerland\\
\texttt{martin.huber@unifr.ch}}\\[0.5em]

{\large Jannis Kueck}\\
{\footnotesize Heinrich Heine University D\"usseldorf, Universit\"atsstr. 1, 40225 D\"usseldorf, Germany\\
\texttt{kueck@dice.hhu.de}}
\end{center}

\vspace{0.5em}

\begin{abstract}
\small
This study introduces a data-driven, machine learning-based method to detect suitable control variables and instruments for assessing the causal effect of a treatment on an outcome in observational data. Our approach tests the joint existence of instruments, which are associated with the treatment but not directly with the outcome (at least conditional on observables), and suitable control variables, conditional on which the treatment is exogenous, and learns the partition of instruments and control variables from the observed data. The detection of sets of instruments and control variables relies on the condition that proper instruments are conditionally independent of the outcome given the treatment and suitable control variables. We establish the consistency of our method for detecting control variables and instruments under certain regularity conditions, investigate the finite sample performance through a simulation study, and provide an empirical application to health data from the Oregon Health Insurance Experiment.
\end{abstract}

\vspace{0.5em}
\noindent\textbf{Keywords:} treatment effects, causality, conditional independence, instrument, covariates

\vspace{0.3em}
\noindent\textbf{Acknowledgements:} We have benefited from comments by Niels Richard Hansen and Leonard Henckel.
\thispagestyle{empty}   
\pagebreak
\setcounter{page}{1}    

\section{Introduction}

Methods for causal analysis, aimed at quantifying the impact of a treatment on an outcome variable, rely on identifying assumptions considered untestable. For example, the well-known selection-on-observables, unconfoundedness, conditional independence, or ignorability assumption requires the treatment to be exogenous when conditioning on observed control variables, hereafter referred to as covariates. 
The selection of covariates is typically justified based on theoretical and/or empirical reasoning, intuition, domain expertise, or prior empirical findings. Nonetheless, in most empirical scenarios, this selection is debatable, given that the optimal set of covariates meeting the selection-on-observables assumption remains fundamentally uncertain.

In this paper, we suggest a machine learning (ML)-based procedure to simultaneously test the presence of (i) covariates satisfying the selection-on-observables (SOO) assumption and (ii) relevant and valid instrumental variables (IVs) in observational data, as well as learning which variables in the data belong to either the set of covariates or IVs. When we refer to relevant and valid IVs, we mean variables that are associated with the treatment (relevance) but have no direct association with the outcome other than through the treatment (validity) conditional on covariates. We demonstrate that appropriate sets of covariates satisfying the identification requirements for treatment effects based on the SOO assumption, as well as relevant and valid IVs, can be detected in a data-driven way instead of being assumed by the researcher. For testing and learning covariates and instruments, we exploit a conditional independence condition that must hold when both relevant and valid instruments as well as covariates that satisfy the SOO assumption exist: The IVs must be conditionally independent of the outcome, given the treatment and the covariates, see for instance the discussions in  \citet{deLunaJohansson2012}, \citet{BlackJooLaLondeSmithTaylor2015}, and \citet[HK,][]{huberkueck2022}.

The contributions of this paper are two-fold. 
(1) We propose the first data-driven method that is able to assess identification without having to impose SOO or IV validity a priori and which can learn the partition of variables into controls and IVs. We consider a setting with nonlinearities and without assuming homogeneous treatment effects. We also show the theoretical, large-sample properties of our procedure. (2) We propose a new orthogonalized score which can be interesting in more general settings and which is also normally distributed under the alternative hypothesis. 
A detailed literature review can be found in appendix \ref{app:litrev}. 

Our approach consists of the following steps. First, within the combined set of potential covariates and IVs, we sequentially test which variable is strongly associated with the treatment conditional on all remaining variables in that set. Second, we consider each of these strong predictors of the treatment as candidate IVs and sequentially test whether each of them is conditionally independent of the outcome when controlling for the treatment and all remaining variables in the combined set of potential covariates and IVs. If the conditional independence assumption is satisfied by (at least) one candidate IV, then the instrument validity and SOO assumptions hold. This implies that the treatment is as good as random conditional on the remaining variables within the combined set of potential covariates and IVs. Treatment effects can then be estimated using methods that control for observed covariates, such as matching, regression, inverse probability weighting, or doubly robust techniques \citep[][]{huber2023causal}. 
In other words, the output of the algorithm is a decision on whether the SOO assumption is fulfilled and whether the researcher should continue with their analysis. Indeed, what we require is that the algorithm reliably indicates the presence of at least one valid IV while the SOO assumption holds; we do not require, nor can we guarantee, the correct selection of all valid IVs. 

Our test focuses on the conditional mean (rather than full) independence of the IV, which implies the identification of average treatment effects (ATE). When assuming a limited set of observed variables (relative to the sample size), we employ regression for both selecting the candidate IVs in the first step and testing the conditional mean independence of the IV in the second step. More concisely, testing is based on the mean squared difference in outcome prediction when regressing the outcome (1) on the treatment, the control variables, and the candidate IV and (2) on the treatment and the control variables (but not the candidate IV). This approach builds on the mean squared difference test based on a quadratic score function in HK, but applies it sequentially across all candidate IVs. Our test flips the original setup: the $H_0$ states that identification fails; the $H_1$ states it holds. We demonstrate that our method is consistent for correctly determining identification, which we illustrate in a simulation study with ten covariates. As a word of caution for empirical applications, we find that the test might require a large sample. 

We apply our method to health data from the Oregon Health Insurance (OHI) Experiment, previously analysed by \citet{finkelstein2012oregon}, in which low-income adults were randomly assigned the opportunity to apply for Medicaid, a public health insurance program in the US. The random assignment provides a plausible IV for actual Medicaid enrollment - the treatment of interest - provided that assignment itself has no direct effect on health outcomes such as doctor visits. Our approach indeed selects random assignment as a valid IV, 
and indicates that Medicaid enrollment is exogenous, conditional on a rich set of more than 200 pre-assignment covariates.

The remainder of this study is organized as follows. Section \ref{Assumptions} discusses the identifying assumptions. Based on these, Section \ref{testapproach} proposes ML-based procedures for jointly testing the IV and SOO assumptions. Section \ref{selmethod} suggests an algorithm that detects strong and valid IVs as well as covariate sets satisfying the SOO assumption. 
Section \ref{simulation} provides a simulation study analyzing the finite sample performance of our method. Section \ref{application} presents an application to the Oregon Health Insurance Experiment. Section \ref{co} concludes. All proofs, an extension to the multivalued IV case, a literature review, pseudo-code and the full simulation results can be found in the appendix.

\section{Identifying assumptions and testable conditional independence}\label{Assumptions}

First, we briefly review the implication that we will use for testing. \citet{deLunaJohansson2012}, \citet{BlackJooLaLondeSmithTaylor2015}, and HK imply that under IV validity and a SOO assumption concerning the treatment, the IV is conditionally independent of the outcome given the treatment and observed covariates. 
To formalize the assumptions, let us denote by $D$ a treatment whose causal effect on an outcome variable $Y$ is of interest. Both $D$ and $Y$ might be discretely or continuously distributed. Using the potential outcomes framework \citep{Neyman23, Rubin74}, we denote by $Y(d)$ the potential outcome when exogenously setting the treatment $D$ of a subject to value $d$ in the support of the treatment. More generally, we will use capital and lower case letters for referring to random variables and specific values thereof, respectively.\footnote{By representing the potential outcome $Y(d)$ as a function solely dependent on a subject's own treatment status $D=d$, we implicitly adhere to the assumption that the potential outcomes of one subject are not influenced by the treatment status of others. This is known as the `stable unit treatment value assumption' \citep[SUTVA, see the discussion in][]{Rubin80, Cox58}, and is invoked throughout.} Furthermore, we denote by $X$ and $Z$  sets of observed covariates and IVs, whose properties are yet to be defined. Based on this notation, we consider the same identifying assumptions as HK.

The first assumption imposes some causal structure. It rules out the existence of reverse causality\footnote{This means that the outcome cannot causally influence any other variables, and the treatment cannot causally affect any variables other than the outcome. This is in line with the conventional practice of measuring covariates and IVs before treatment assignment, eliminating the potential for reverse causality between $D$ and $Y$ and the pre-treatment variables $X$ and $Z$.} and enforces the principle of causal faithfulness. 
We formalise this causal structure using the previously mentioned potential outcome notation, by applying the latter also to other variables. To this end, let \(A(b)\) and \(A(b,c)\) correspond to the potential value of variable \(A\) when setting variable \(B\) to \(b\), or variables \(B\) and \(C\) to \(b\) and \(c\), respectively.
\begin{assumption}[Causal structure]\label{ass0}
\begin{eqnarray*}
D(y)=D,\quad X(d,y)=X,\textrm{ and }Z(d,y)=Z\quad  \forall d \in \mathcal{D}\textrm{ and }y  \in \mathcal{Y},
\end{eqnarray*}
only variables which are d-separated in some causal model are statistically independent.
\end{assumption}
\noindent $\mathcal{D}$ and $\mathcal{Y}$ denote the support of $D$ and $Y$, respectively. The first line of A \ref{ass0} rules out a causal effect of outcome $Y$ on $D$, $X$, or $Z$ and of treatment $D$ on $X$ or $Z$. However, it allows for the possibility of both $X$ and $Z$ affecting $D$, $Y$, or even each other. This assumption aligns with the directed acyclic graph \citep[DAG, see e.g.][]{Pearl00} presented in Figure \ref{causalchain}, where causal relationships between variables are indicated by arrows:  $Z$ and $X$ affect $D$, and $D$ and $X$ affect $Y$. Additionally, $X$ may influence $Z$ or vice versa, denoted by the bidirectional arrow. The DAG also features unobserved terms $U$ and $V$ that affect $Y$ and $D$, respectively, with dashed arrows denoting the unobservable nature of these effects. The second line of A \ref{ass0} enforces causal faithfulness, ensuring that only variables which are d-separated in the sense of \citet{Pearl1988Probabilistic}, i.e.\ not associated with each other via some causal paths (possibly conditional on other variables) are statistically independent (or conditionally independent).\footnote{
d-separation relies on blocking causal paths between variables. Formally, a path between two (sets of) variables $A$ and $B$ is blocked when conditioning on a (set of) control variable(s) $C$ if 
\begin{enumerate}
	\item the path between $A$ and $B$ is a causal chain, implying that $A\rightarrow M \rightarrow B$ or $A\leftarrow M \leftarrow B$, or a confounding association, implying that $A\leftarrow M \rightarrow B$, and variable (set) $M$ is among control variables $C$ (i.e.\ controlled for),
	\item the path between $A$ and $B$ contains a collider, implying that $A\rightarrow  S  \leftarrow B$, and variable (set) $S$ or any variable (set) causally affected by $S$ is not among control variables $C$ (i.e.\ not controlled for).
\end{enumerate}
Based on this definition of blocking, the d-separation criterion states that $A$ and $B$ are d-separated when conditioning on control variable(s) $C$ if and only if $C$ blocks all paths between $D$ and $Y$.} While d-separation is generally a sufficient condition for the (conditional) independence of two variables, it is a necessary condition under causal faithfulness.

\begin{figure}[!htp]
	\begin{center}
	    \caption{\label{causalchain}  Causal graph satisfying Assumption \ref{ass0}}\bigskip
    \includegraphics[width=1\textwidth]{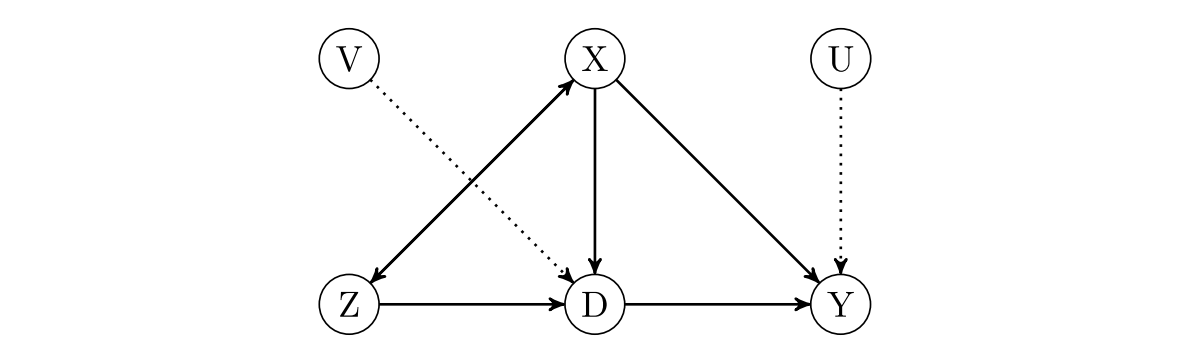}	
\end{center} 
\begin{scriptsize}
    \textit{Note: Z is an instrument, D is the treatment variable, Y is the outcome, X is an observed control, V and U are unobservables, with dotted lines denoting that the relation between variables is unobservable.}
\end{scriptsize}
\end{figure}
\noindent The second assumption is a common support assumption concerning the treatment and the IV:
\begin{assumption}[Common support]\label{asscommonsupport}
\begin{eqnarray*}
\mathbb{P}(D=d, Z=z|X)>0 \quad \forall d \in \mathcal{D}\textrm{ and } z \in \mathcal{Z},
\end{eqnarray*}
\end{assumption}
\noindent where $\mathcal{Z}$ denotes the support of $Z$. 
Under continuous treatment and/or IV variables, joint probabilities are to be replaced by joint density functions conditional on $X$. A \ref{asscommonsupport} implies that both $\mathbb{P}(D=d|X)$, the so-called treatment propensity score, and $\mathbb{P}(Z=z|D,X)$, the IV propensity score, are larger than zero. 
The third assumption imposes a statistical association between the IV and the treatment conditional on the covariates and works as a relevance or first stage assumption.
\begin{assumption}[Conditional dependence between the treatment and instrument]\label{ass3}
\begin{eqnarray*}
D \not\!\perp\!\!\!\perp Z|X,
\end{eqnarray*}
\end{assumption}
\noindent where $\not\!\perp\!\!\!\perp$ denotes statistical dependence. 
This assumption is satisfied in Figure \ref{causalchain}, where $Z$ affects $D$. 
The fourth assumption invokes SOO, i.e.\ quasi-random treatment assignment conditional on $X$ as e.g.\ considered in \citet{Im04}:
\begin{assumption}[Conditional independence of the treatment]\label{ass1}
\begin{eqnarray*}
Y(d) {\perp\!\!\!\perp} D | X\quad \forall d \in \mathcal{D},
\end{eqnarray*}
\end{assumption}
\noindent where ${\perp\!\!\!\perp}$ denotes statistical independence. A \ref{ass1} implies that conditional on covariates $X$, there exist no unobserved confounders jointly affecting outcome $Y$ and treatment $D$. 
\begin{assumption}[Conditional independence of the instrument]\label{ass2}
\begin{eqnarray*}
Y(d) {\perp\!\!\!\perp} Z | X\quad \forall d \in \mathcal{D}.
\end{eqnarray*}
\end{assumption}
\noindent A \ref{ass2} rules out unobserved confounders jointly affecting $Y$ and $Z$ when controlling for $X$. 
Moreover, by assuming that the potential outcome is solely a function of $d$ (and not $z$), A \ref{ass2} also implies that the IV does not have a direct impact on the outcome, except through its impact on the treatment, conditional on X. By this exclusion restriction, it holds that conditional on $X$, $Y(d,z)=Y(d,z')=Y(d)$ for any IV values $z$ and $z'$.
Otherwise, A\ref{ass2} would be violated, because it would follow that $Y(d)=Y(d,Z)$ and $Y(d,Z) \not\!\perp\!\!\!\perp Z | X$. 
A\ref{ass0} is maintained and defines the causal ordering of $D$, $Y$ and candidate variables. Propensity score trimming ensures that A\ref{asscommonsupport} holds. A\ref{ass3} is evaluated in the first-stage screening. A\ref{ass1} and A\ref{ass2} are not imposed, instead, the procedure tests a joint implication of these assumptions via conditional mean-independence. 
When this restriction is supported for at least one candidate IV, this provides empirical support for identification of the ATE via covariate adjustment, without requiring prior knowledge of which variables are valid IV or controls.

These assumptions can be used to test for the identification of causal effects. HK's Theorem 1 demonstrates that conditional on A \ref{ass0} and \ref{ass3}, $Y {\perp\!\!\!\perp}  Z | D=d, X$, the testable conditional independence, is necessary and sufficient for the joint satisfaction of A \ref{ass1} and \ref{ass2} when considering potential outcomes $Y(d)$ which match the factual treatment assignment $D=d$.
Formally,
\begin{eqnarray}\label{mainresult}
Y(d) {\perp\!\!\!\perp}  D |  X,\quad Y(d) {\perp\!\!\!\perp}  Z |  X \iff  Y {\perp\!\!\!\perp}  Z | D=d, X \quad\forall d \in \mathcal{D}.
\end{eqnarray}

Instead of verifying $Y {\perp\!\!\!\perp}  Z | D, X$, HK test conditional mean independence of the IV:
 \begin{eqnarray}\label{mainresult2}
 \mathbb{E}[Y|D,X]=\mathbb{E}[Y|D,X,Z].
 \end{eqnarray}
 Condition \eqref{mainresult2} is sufficient when considering the identification of average causal effects such as the conditional average treatment effect (CATE) given $X$, $\mathbb{E}[Y(1)-Y(0)|X]$, or the average treatment effect (ATE), $\mathbb{E}[Y(1)-Y(0)]$.
 Theorem 2 in HK shows that \eqref{mainresult2} holds when replacing  Assumptions \ref{ass1} and \ref{ass2} by the weaker conditional mean independence assumptions $\mathbb{E}[Y(d)| D, X]=\mathbb{E}[Y(d)|  X]$ and $\mathbb{E}[Y(d)| Z, X]=\mathbb{E}[Y(d)|  X]$ $\forall d \in \mathcal{D}$, as well as A \ref{ass3} by the first stage condition $\mathbb{E}[D|Z,X] \neq \mathbb{E}[D|X]$, implying that the conditional mean of $D$ varies with $Z$.
 Formally, conditional on A \ref{ass0} and $\mathbb{E}[D|X,Z] \neq \mathbb{E}[D|X]$, it holds that
\begin{eqnarray}\label{mainresult3}
&&\mathbb{E}[Y(d)| D, X]=\mathbb{E}[Y(d)|  X],\quad \mathbb{E}[Y(d)|X, Z]=\mathbb{E}[Y(d)|  X]\\ &\iff&  \mathbb{E}[Y|D=d,X, Z]=\mathbb{E}[Y|D=d,X] \quad\forall d \in \mathcal{D}.\notag
\end{eqnarray}
The testable implication $\mathbb{E}[Y|D=d,X, Z]=\mathbb{E}[Y|D=d,X]$ is necessary and sufficient for the joint satisfaction of conditional mean independence of the treatment and the IV when considering potential outcomes $Y(d)$ matching the factual treatment assignment $D=d$. This in turn is sufficient for identifying the conditional mean
potential outcome $E[Y(d)|X]$ based on observations with $D=d$ when controlling for $X$ and thus, for identifying the CATE and ATE, see for instance the discussion of weak unconfoundedness in \citet{Im00}. This matches the single world intervention graphs (SWIG) framework of \citet{richardson2013single} under the Finest Fully Randomized Causally Interpretable Structured Tree Graph (FFRCISTG) semantics of \citet{Ro86}.

The testing approach of HK requires the prior specification of the IV, $Z$, and covariates, $X$. In contrast, our testing approach, as introduced below, does not require the predefinition of $Z$ and $X$ when testing \eqref{mainresult2}. Instead, it learns them from the data by iteratively considering variables as IV $Z$. This feature appears attractive in many practical contexts where obvious IVs are not available.

\section{Testing based on double machine learning}\label{testapproach}
We henceforth suggest a testing approach based on DML
based on the following null hypothesis $H_0$, which is equivalent to the conditional mean independence of the IV provided in condition \eqref{mainresult2}:
\begin{eqnarray}\label{nullhyp1}
H_0: \mathbb{E}[Y|D=d,X=x, Z=z]-E[Y|D=d,X=x]=0\quad\forall d \in  \mathcal{D}, x \in \mathcal{X}, z\ \in \mathcal{Z}.
\end{eqnarray}
Under the null, $H_0$, the mean conditional outcome is constant across values of $Z$ given any value of $D$ and $X$, which may be tested for any values of $D$, $X$, and $Z$ in their respective support. However, if one or several variables are of rich support, this implies many testable implications. 
For this reason, one possible testing approach is to follow HK and test violations of \eqref{nullhyp1} based on the mean squared difference between the conditional mean outcome when including versus excluding the IV in the conditioning set. Denoting the conditional means by $\mu(d,x,z)=\mathbb{E}[Y|D=d,X=x, Z=z]$ and $m(d,x)=\mathbb{E}[Y|D=d,X=x]$, one aims at testing the following implication of eq. \eqref{nullhyp1}: 
\begin{align}\label{H0_quadratic}
    \mathbb{E}[\left(\mu(D,X,Z)-m(D,X)\right)^2]=0,
\end{align}
based on a moment condition which uses the following \citet{Neyman1959}-orthogonal score:
\begin{align}\label{score2}
\phi(W,\theta,\eta)=(\eta_1(W)-\eta_2(W))^2-\theta+\zeta.
\end{align}
$W=(Y,D,X,Z,\zeta)$ are random variables and $\eta=(\eta_1,\eta_2)$ are the so-called nuisance parameters, whose true values correspond to  $\eta_{0,1}(W)=\mu(D,X,Z)$ and $\eta_{0,2}(W)=m(D,X)$.
We note that the independent mean-zero random variable $\zeta$ in \eqref{score2}
is added to avoid a degenerate distribution of the estimator under $H_0$, a common problem in specification tests, see e.g. \citet{hong1995consistent} and \citet{wooldridge1992test}. A disadvantage of testing based on the score function in eq. \eqref{score2} is the requirement to choose a random term $\zeta$, as the optimal selection of  $\zeta$ in a given dataset is generally unknown. Further, while the estimator based on eq. \eqref{score2} is asymptotically normal under the null hypothesis, as demonstrated in HK, this is generally not the case under the alternative hypothesis. For this reason, we subsequently propose a new testing approach that is based on a refined score function that does not require user-selected random terms and entails a test statistic that is normally distributed under both the null and alternative hypotheses.\footnote{An alternative testing approach for conditional independence satisfying Neyman orthogonality that can be applied in our context is suggested by \citet{lundborg2024projectedcovariancemeasureassumptionlean}, 
\citep[see also][]{kook2024algorithmagnosticsignificancetestingsupervised}.}

For the moment, let us assume that the instrument $Z$ is binary. An extension to multivalued IVs is provided in Appendix \ref{app:extensionmulti}. Denote by $p(D,X)=\mathbb{P}(Z=1|D,X)$ the IV propensity score. The score function considered in this case is given below: 
\begin{align}\label{score3}  
&\quad \tilde{\psi}(W,\theta,\eta)\\
&=(\mu(D,X,1)-\mu(D,X,0))^2\notag\\&+2(\mu(D,X,1)-\mu(D,X,0))\left(\frac{(Y-\mu(D,X,1))\cdot Z}{p(D,X)}\notag
-\frac{(Y-\mu(D,X,0))\cdot (1-Z)}{1-p(D,X)}\right)\notag\\
&+\mu(D,X,1)-\mu(D,X,0)+\left(\frac{(Y-\mu(D,X,1))\cdot Z}{p(D,X)}\notag
-\frac{(Y-\mu(D,X,0))\cdot (1-Z)}{1-p(D,X)}\right)\notag\\
&-\theta,\notag
\end{align}
with $\theta_0=\mathbb{E}[\left(\mu(D,X,1)-\mu(D,X,0)\right)^2]+\mathbb{E}[\mu(D,X,1)-\mu(D,X,0)]$.
Testing based on \eqref{score3} corresponds to an aggregate $L_2$-type measure that can be used to test violations across values of $D$, $X$, and $Z$, which is common in specification tests based on nonparametric regression.\footnote{See e.g.\ \citet{racine1997consistent}, \citet{racine2006testing}, \citet{hong1995consistent} and \citet{wooldridge1992test}.}  In addition to the squared difference in conditional mean outcomes, 
the score function notably contains a term in which the difference in conditional mean outcomes is multiplied with a difference in expressions obtained by inverse probability weighting (IPW) with the IV propensity score. In fact, our new score above combines the orthogonalized squared score in \eqref{score2} with the popular doubly robust score. 
Just as the squared difference in conditional mean outcomes, the score function $\tilde{\psi}$ is zero in expectations,
$\mathbb{E}[\tilde{\psi}(W,\theta_0,\eta_0)]=0$,
under the null hypothesis that $\theta_0=0$, which follows from iterated expectations, and 
satisfies the Neyman orthogonality property (see Appendix \ref{proofNeyman}).

When testing, we assume an i.i.d.\ sample of size $n$, in which $i$ is the index of an observation and $W_i=(Y_i,D_i,X_i, Z_i)$ are the variable values of observation $i$ in that sample, with $i$ $\in$ $\{1,2,..,n\}$. We apply cross-fitting as for instance discussed in \citet{doubleML} to avoid over-fitting due to a correlation of the estimation of the nuisance parameters and $\theta_0$.  Therefore, we split the data into $K$ subsamples of size $N=n/K$. The cross-fitted estimator is given by
\begin{eqnarray}\label{eq:test1}
\hat{\theta}=\frac{1}{K}\sum_{k=1}^K\mathbb{E}_{N,k}[\psi_k(W_i,0,\hat{\eta})],
\end{eqnarray} 
with $\hat{\eta}=(\hat{\mu},\hat{p}_1,\dots,\hat{p}_L)$.
Under the regularity conditions in A \ref{assnorm}, $\hat{\theta}$ is asymptotically normal and  $\sqrt{n}$-consistent, as stated in Theorem \ref{theorem1}. The proof is provided in Appendix \ref{proofth1}.
\begin{assumption}[Asymptotic Normality]\label{assnorm}
Define $U=Y-\mu(D,X,Z)$.
The following assumption needs to hold for all $n\ge 3$, $\mathbb{P}\in\mathcal{P}$ and $q>2$: (i) $\|Y\|_{\mathbb{P},q}<C$ and $\mathbb{E}[U^21(Z\in Z_l)]>c$ (ii) Given a random subset $I$ of $[n]$ of size $N = n/K$, the nuisance parameter estimator $\hat{\eta}_0=\hat{\eta}_0((W_i)_{i\in I^c})$ obeys $\|\hat{\eta}-\eta_{0}\|_{\mathbb{P},2q}\le C$,  $\|\hat{\eta}-\eta_{0}\|_{\mathbb{P},4}\le \delta_N$, and $\|\hat{\eta}-\eta_{0}\|_{\mathbb{P},2}\le\delta_N^{1/2}N^{-1/4}$ with $\mathbb{P}$-probability not less than $1-o(1)$.
\end{assumption}
\begin{theorem}\label{theorem1}
Conditional on Assumptions \ref{assnorm}, the estimator in eq. \eqref{eq:test1} satisfies
\begin{eqnarray}\label{eq:tasy}
\sqrt{n}\sigma^{-1}\hat{\theta}\overset{d}\rightarrow N(\theta_0,1),
\end{eqnarray}
uniformly over $P\in\mathcal{P}$, where $\sigma^2 = E[\psi(W,\theta_0,\eta_0)^2]$. Moreover, the result continues to hold if
$\sigma^2$ is replaced by $\hat{\sigma}^2:=\mathbb{E}_n[(\psi(W_i,\hat{\theta},\hat{\eta}))^2]$. Consequently, a test that rejects the null hypothesis $H_0$, $\theta_0=0$, if $|\sqrt{n}\hat{\sigma}^{-1}\hat{\theta}|>\Phi^{-1}(1-\alpha/2)$ has asymptotic level $\alpha$.
\end{theorem}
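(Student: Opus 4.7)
The plan is to follow the standard double machine learning template of \citet{doubleML}, adapted to the specific score $\psi$ in equation \eqref{score4}. I would first verify that the preparatory ingredients are in place: the score satisfies $\mathbb{E}[\psi(W,\theta_0,\eta_0)]=0$ (established when deriving $\theta_0$), is linear in $\theta$ with Jacobian $-1$, is Neyman-orthogonal (proved in Appendix \ref{proofNeyman2}), and has variance $\sigma^2$ bounded away from zero and above by Assumption \ref{assnorm}(i) together with the lower bound $\mathbb{E}[\varepsilon^2 1(Z\in Z_l)]>c$ (which prevents degeneracy, recovering the role played by $\zeta$ in the previous score \eqref{score2}).

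The core of the proof is the asymptotic linearization
\begin{equation*}
\sqrt{n}(\hat\theta-\theta_0) = \sqrt{n}\,\mathbb{E}_n[\psi(W,\theta_0,\eta_0)] + R_1 + R_2 + o_P(1),
\end{equation*}
where $R_1$ collects the empirical-process error from plugging in $\hat\eta$ in place of $\eta_0$, and $R_2$ is the nonstochastic bias term. I would handle $R_1$ using cross-fitting: conditional on the auxiliary fold $I^c$, $\hat\eta$ is deterministic, so the conditional second moment of $\sqrt{N}(\mathbb{E}_{N,k}-\mathbb{E})[\psi(\cdot,\theta_0,\hat\eta)-\psi(\cdot,\theta_0,\eta_0)]$ is controlled by $\|\psi(\cdot,\theta_0,\hat\eta)-\psi(\cdot,\theta_0,\eta_0)\|_{\mathbb{P},2}^2$. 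Using H\"older's inequality with the $\mathbb{P},2q$ and $\mathbb{P},4$ bounds in Assumption \ref{assnorm}(ii), this is $o_P(1)$ uniformly over $\mathcal{P}$, so $R_1=o_P(1)$.

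The main obstacle, and the step that warrants most care, is bounding $R_2=\sqrt{n}\,\mathbb{E}[\psi(W,\theta_0,\hat\eta)-\psi(W,\theta_0,\eta_0)]$. I would perform a second-order functional Taylor expansion of $\eta\mapsto \mathbb{E}[\psi(W,\theta_0,\eta)]$ around $\eta_0$: Neyman orthogonality kills the first-order G\^ateaux derivative, so the expansion reduces to a quadratic form in $\hat\eta-\eta_0$. Componentwise for the nuisance block $\eta=(\mu,m,p_1,\ldots,p_L)$, the cross terms $(\hat\mu-\mu_0)(\hat p_l-p_{0,l})/p_{0,l}(1-p_{0,l})$ appearing through the IPW pieces of \eqref{score4} have to be bounded by Cauchy--Schwarz, giving $\|\hat\eta-\eta_0\|_{\mathbb{P},2}^2\le \delta_N N^{-1/2}$; together with the quadratic pieces from squaring $(\mu(D,X,Z\in Z_l)-\mu(D,X,Z\notin Z_l))$, this yields $R_2=O_P(\sqrt{n}\cdot \delta_N/\sqrt{N})=o_P(1)$ once $K$ is fixed.

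With $R_1, R_2$ negligible, the Lindeberg CLT applied to $\sqrt{n}\,\mathbb{E}_n[\psi(W,\theta_0,\eta_0)]$ — which is valid uniformly over $\mathcal{P}$ by the moment bound $\|Y\|_{\mathbb{P},q}<C$ with $q>2$ and standard Lindeberg conditions — delivers \eqref{eq:tasy}. For the variance estimator, I would show $\hat\sigma^2\to_P \sigma^2$ uniformly over $\mathcal{P}$ by a similar cross-fitted argument: replace $\hat\theta$ with $\theta_0$ and $\hat\eta$ with $\eta_0$ inside $\mathbb{E}_n[\psi(W_i,\cdot,\cdot)^2]$ at the cost of terms controlled by the same rate conditions, then invoke a uniform law of large numbers. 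The conclusion about the rejection rule at level $\alpha$ is then immediate from Slutsky's theorem and continuity of $\Phi^{-1}$.
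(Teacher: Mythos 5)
Your proposal is correct and follows essentially the same route as the paper: the paper simply invokes Theorem 3.1 of \citet{doubleML} and verifies its Assumptions 3.1 and 3.2 (moment condition and Neyman orthogonality from Appendix \ref{proofNeyman2}, linearity of the score with $\psi^a(W,\eta)=-1$, the nuisance rate conditions from Assumption \ref{assnorm}, and non-degeneracy of $\sigma^2$ via $\mathbb{E}[\varepsilon^2 1(Z\in Z_l)]>c$ and $p_l(D,X)>c$), which are exactly the ingredients you check. You unpack the linearization, the cross-fitting control of the empirical-process term, and the second-order remainder explicitly rather than citing the general theorem, but the mathematical content specific to this score is identical.
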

Theorem \ref{theorem1} states that the test statistic is normally distributed both under the null ($H_0: \theta_0=0$) and alternative hypothesis ($H_1: \theta_0\neq0$). The following Corollary \ref{coro:test} shows that the proposed test is consistent, i.e., the power converges to one as $n\rightarrow \infty$.

\begin{corollary}\label{coro:test}
Let $c_\alpha:=\Phi^{-1}(1-\alpha/2)$ be the critical value of the test proposed above. Under the alternative hypothesis ($\theta_0\neq 0$), it holds
    $$\lim\limits_{n\rightarrow\infty}\, P(|\sqrt{n}\hat{\theta}/\hat{\sigma}|> c_\alpha) = 1. $$
\end{corollary}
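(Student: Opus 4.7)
The plan is to reduce the corollary to Theorem \ref{theorem1} by decomposing the test statistic around $\theta_0$ and showing that the shift $\sqrt{n}\theta_0/\hat{\sigma}$ dominates the stochastic fluctuation. Specifically, I would write
\[
\frac{\sqrt{n}\,\hat{\theta}}{\hat{\sigma}} \;=\; \frac{\sqrt{n}(\hat{\theta}-\theta_0)}{\hat{\sigma}} \;+\; \frac{\sqrt{n}\,\theta_0}{\hat{\sigma}},
\]
so that the test statistic splits into a centered piece with a limiting standard normal distribution and a deterministic drift that diverges whenever $\theta_0\neq0$.

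First, I would argue that the centered piece is $O_p(1)$. Theorem \ref{theorem1} together with Slutsky's lemma implies that $\sqrt{n}(\hat{\theta}-\theta_0)/\hat{\sigma}\rightsquigarrow N(0,1)$, since $\hat{\sigma}^2\to\sigma^2$ in probability by the same theorem. In particular, this centered piece is stochastically bounded. Next, for the drift term, Assumption \ref{assnorm}(i) guarantees $\sigma^2=\mathbb{E}[\psi(W,\theta_0,\eta_0)^2]$ is bounded away from $0$ by a positive constant (via $\mathbb{E}[\varepsilon^2\mathbf{1}(Z\in Z_l)]>c$ after expanding the score), so $\hat{\sigma}^{-1}=O_p(1)$ and in particular does not wash out the $\sqrt{n}$ factor. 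Consequently, since $\theta_0\neq 0$ under the alternative, $|\sqrt{n}\,\theta_0/\hat{\sigma}|\to\infty$ in probability.

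Combining the two pieces via the reverse triangle inequality,
\[
\left|\frac{\sqrt{n}\,\hat{\theta}}{\hat{\sigma}}\right| \;\geq\; \left|\frac{\sqrt{n}\,\theta_0}{\hat{\sigma}}\right| - \left|\frac{\sqrt{n}(\hat{\theta}-\theta_0)}{\hat{\sigma}}\right|,
\]
where the first term on the right diverges to $+\infty$ in probability and the second is $O_p(1)$. Hence the left-hand side also diverges in probability, which implies that for any fixed critical value $c_\alpha<\infty$,
\[
P\!\left(\left|\tfrac{\sqrt{n}\,\hat{\theta}}{\hat{\sigma}}\right|>c_\alpha\right) \;\longrightarrow\; 1.
\]

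The main obstacle I anticipate is purely a bookkeeping one, namely verifying that the asymptotic normality statement of Theorem \ref{theorem1}, which is phrased for $\sqrt{n}\sigma^{-1}\hat{\theta}$ under the null location $\theta_0=0$, can be transposed to the centered statistic $\sqrt{n}(\hat{\theta}-\theta_0)/\hat{\sigma}$ when $\theta_0\neq0$; this only requires recalling that the proof of Theorem \ref{theorem1} proceeds via $\sqrt{n}(\hat{\theta}-\theta_0)=\tfrac{1}{\sqrt{n}}\sum_i\psi(W_i,\theta_0,\eta_0)+o_p(1)$ (from the Neyman-orthogonality and cross-fitting arguments), which holds uniformly over $\mathcal{P}$ regardless of the value of $\theta_0$. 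Once this and the positivity of $\sigma^2$ are in hand, the conclusion follows with no further calculation.
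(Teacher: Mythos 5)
Your proposal is correct and follows essentially the same route as the paper: both arguments rest on Theorem \ref{theorem1} (consistency of $\hat{\theta}$ for $\theta_0$ with $\hat{\sigma}$ bounded and bounded away from zero) combined with a reverse-triangle-inequality step showing that the drift $\sqrt{n}|\theta_0|/\hat{\sigma}$ dominates the $O_p(1)$ centered fluctuation, so that the rejection probability tends to one for any fixed $c_\alpha$. The only cosmetic difference is that the paper phrases the inequality on the scale of $\hat{\theta}$ itself, i.e.\ $P(|\hat{\theta}-\theta_0|\le|\theta_0|-c_\alpha\hat{\sigma}/\sqrt{n})\to 1$, which needs only consistency rather than the full $\sqrt{n}$-rate you invoke, but this is the same idea.
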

\noindent
This holds true since $$P(|\sqrt{n}\hat{\theta}/\hat{\sigma}|> c_\alpha)=P\left(\left|\widehat{\theta}_j\right| \geq c_\alpha \frac{\widehat{\sigma}_j}{\sqrt{n}}\right) \geq P\left(\left|\widehat{\theta}_j-\theta_j\right| \leq \left|\theta_j\right|-c_\alpha \frac{\widehat{\sigma}_j}{\sqrt{n}} \right) =1,$$ 
as long as $c_\alpha=o(\sqrt{n})$. 

\section{Selection Method}\label{selmethod}
The tests outlined in Section \ref{testapproach} were conditional on having already defined the instrument $Z$ and covariates $X$ under which the SOO assumption with respect to the treatment supposedly holds. However, a main contribution of this study is a data-driven approach for learning partitions of observed pre-treatment variables into IVs and covariates. To do so, we suggest applying the testing approach iteratively when sequentially considering one variable from the set of all pre-treatment variables, henceforth denoted by $Q$, as instrument $Z$ and the remaining variables as covariates $X$. More specifically, our procedure consists of the following steps (details provided in subsections): 

(1) Select candidate variables with a strong first-stage effect on $D$ from the observed variables $Q$, conditional on remaining variables. $\mathcal{S}$ is the set of strong IVs. The statistical criterion to decide on IV strength will be introduced below. $\hat{S}$ then is the set of candidates selected as strong.

(2) Each candidate in $\hat{S}$ is iteratively defined as the instrument, $Z$, and all remaining variables in $Q$ are defined as covariates, $X$, then the test of hypothesis \eqref{H0_quadratic} is run for each of the candidates. 

(3) If hypothesis \eqref{H0_quadratic} is not rejected in (2) for a candidate, then assign that candidate to the set of IVs for which mean independence holds, $\mathcal{V}$. If there are multiple candidates that pass the test, select the test with the maximal \texttt{p}-value as the final IV and the remaining variables in $Q$ as final covariates $X$. If (2) suggests that the null is violated in all iterations, then implication \eqref{mainresult2} is rejected. 

Step (1) is required to select candidates which satisfy $E[D|X,Z] \neq E[D|X]$, the first stage condition. Let us assume that $Q$ is low-dimensional, meaning that sample size $n$ is larger than the number of variables in $Q$, denoted by $p$. In this case, step (1) might be implemented based on a first stage regression of $D$ on $Q$ and selecting all regressors with statistically significant associations after controlling for multiple hypothesis testing issues into the set of candidate instruments $\hat{S}$. In \textit{high}-dimensional settings where $p>n$, regularization can be applied to select strong IV candidates. 

\subsection{Strong IV Selection}\label{sec:IVselection}
To select strong IVs in the high-dimensional setting, we use first-stage hard-thresholding (FSHT) 
proposed by \citet{Guo2018Confidence}. In this approach, IVs are considered irrelevant if their t-statistic does not exceed a predefined threshold. 
We iteratively consider each variable in $Q$  
as instrument $Z$ and any remaining variables as covariates. That is, when considering the $j$th variable in $Q$ as instrument and denoting it by $Q_j$, we have that $Z=Q_j$ and $X=Q_{[j]}$, where $Q_{[j]} = Q \setminus Q_j$ and $Q = X \cup Z$ when estimating the first-stage association $E[D|X,Z]$. Leveraging the DML literature, we consider the following partially linear specification to estimate the effect of a variable $Z$ on treatment $D$:
\begin{align}
D& =\gamma_j^T Z + g(X) + \varepsilon,\\
Z & = h(X) + v.
\end{align}
Here, $g(X)$ and $h(X)$ are general functions of $X$, and the first-stage effect is indexed by $j$ as it may vary with each variable considered. 
DML-based estimation of $\gamma_j$ is asymptotically normal under regularity conditions.\footnote{In particular, estimators of the models of $D$ and $Z$ should attain a convergence rate of $o(n^{-1/4})$. Then $\sqrt{n}\hat{\sigma}_{\gamma,j}^{-1}(\hat{\gamma}_j-\gamma_{0,j})\overset{d}{\rightarrow} N(0,1)$, where $\hat{\sigma}_{\gamma,j}^{-1}$ is the standard error of $\hat{\gamma}_j$. \citep{doubleML}}  
As only one IV is considered in turn, the first-stage F-statistic is equal to the square of the t-statistic, $t_j^2=F_j$, which is asymptotically $\chi_1^2$-distributed with one degree of freedom, $F_j \overset{d}{\rightarrow} \chi_1^2$ \citep[e.g. Proposition 3 of][]{Masten2021Salvaging}. Under the alternative, $\gamma_j \neq 0$, we have $\frac{F_j}{n} \overset{P}{\rightarrow} \kappa_j$, where $\kappa_j > 0$ is a constant. The critical values $C_n$ for the F-statistic need to satisfy $C_n \to \infty$ and $C_n = o(n)$ as $n \to \infty$, for the FSHT procedure to successfully identify strong IVs with probability approaching 1, $\lim\limits_{n\rightarrow\infty} P(\hat{S} = \mathcal{S}) = 1$. 
An IV $j$ is considered as strong if $F_j>C_{\tilde{\alpha}}$ with $\tilde{\alpha}=0.1/\log(n)$.\footnote{\citet{Windmeijer2021Confidence} exploit a result in \citet{Potscher1983Order} and \citet{Andrews1999Consistent}, which states that a sequence of $\texttt{p}$-values, $\texttt{p}_n$, satisfying $\texttt{p}_n \rightarrow 0$ and $\log(\texttt{p}_n) = o(n)$ can be chosen to meet the just-mentioned conditions on $C_n$.
Based on the recommendation in \citet{Belloni2012Sparse}, they adopt $\texttt{p}_n = 0.1 / \log(n)$.} $C_{\tilde{\alpha}}$ denotes the critical value $q_{\chi^2_1}(1-\tilde{\alpha})$ and $q_{\chi^2_1}(\cdot)$ denotes the quantile function of the $\chi^2_1$-distribution.
As one way to implement DML, which is particularly useful in high dimensions ($p>n$), we estimate the nuisance functions $g(X)$ and $h(X)$ using the LASSO. 
\subsection{Valid IV Selection and Identification Test}
In step (2), we test the null hypothesis in eq. \eqref{H0_quadratic} iteratively over all candidate IVs that pass the first-stage threshold. 
Our aim is to find a partition of variables for which the conditional independence of the respective candidate IV holds.
 To discuss this more formally, we introduce the partition $$\mathcal{P}_j = \{Z=Q_j, \ \ X=Q_{[j]}\},$$ such that variable $j$ in set $Q$ is chosen to be the IV, while the remaining variables in $Q$ are used as controls. Moreover, let $\mathcal{V}$ denote the set of candidate IVs which are conditionally mean independent of the outcome, satisfying condition \eqref{mainresult2} and the null in eq. \eqref{H0_quadratic}:
$$\mathcal{V} = \{j: E[Y|D,X]=E[Y|D,X,Z]\}.$$
Moreover, we denote the set of partitions for which the IV has a first stage effect on the treatment and the conditional independence of the IV holds by $\mathcal{P}^*$: 
\begin{equation}\label{eq:passes}
	\mathcal{P}^* =  \{\mathcal{P}_j: j \in \left(\mathcal{S} \cap \mathcal{V}\right) \}.
\end{equation}
The corresponding estimated set of partition(s) $\mathcal{P}^*$, denoted by $\hat{\mathcal{P}}_{pass}$, is given by 
\begin{equation}\label{eq:selected}
	\hat{\mathcal{P}}_{pass} = \left\{\mathcal{P}_j: j \in \left(\hat{S} \cap \hat{\mathcal{V}}\right) \right\},
\end{equation}
where $\hat{\mathcal{V}}=\{j: |\sqrt{n}\hat{\sigma}^{-1}_j\hat{\theta}_j| < c_\alpha\}$ is the set of instruments for which conditional independence is not rejected based on the test defined in Theorem \ref{theorem1} and Corollary \ref{coro:test} with significance level $\alpha$ and critical value $c_{\alpha}$.
Our main contribution is the development of a new procedure which tests the identification of a causal effect in a data-driven way. Thus, we consider the following hypotheses:
\begin{align}\label{identification}
H_0: \text{no identification}\quad vs. \quad H_1: \text{identification (conditional mean independence}).
\end{align}

We conclude that $H_1$ is true (the ATE is identified, see the null hypothesis in \eqref{nullhyp1}) if $\hat{\mathcal{P}}_{pass}\neq \emptyset$. 
The following Theorem helps us understand the type 1 error of our test.
\begin{theorem}\label{prop:all}
    Under the assumptions of Th. \ref{theorem1}, assuming $\lim\limits_{n\rightarrow\infty} P(\hat{S}=\mathcal{S}) = 1$, for a given $\alpha$, it holds 
$$\lim\limits_{n\rightarrow\infty}P(\hat{\mathcal{P}}_{pass}\subseteq\mathcal{P}^*)=1.$$
\end{theorem}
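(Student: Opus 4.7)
The plan is to show the complement event $\{\hat{\mathcal{P}}_{pass}\not\subseteq\mathcal{P}^*\}$ has probability going to zero. Unpacking the definitions, this event occurs if and only if there exists some index $j$ with $j\in\hat{\mathcal{S}}\cap\hat{\mathcal{V}}$ but $j\notin\mathcal{S}\cap\mathcal{V}$. Since the set of candidate variables $Q$ is finite with cardinality $p$ fixed, a union bound reduces the task to showing, for each fixed $j$ with $j\notin\mathcal{S}\cap\mathcal{V}$, that $P(j\in\hat{\mathcal{S}}\cap\hat{\mathcal{V}})\to 0$.

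I would then split the argument into two cases according to which part of $\mathcal{S}\cap\mathcal{V}$ fails. Case 1: $j\notin\mathcal{S}$. Then the hard-thresholding consistency assumption $\lim_{n\to\infty}P(\hat{\mathcal{S}}=\mathcal{S})=1$ immediately gives $P(j\in\hat{\mathcal{S}})\to 0$, so a fortiori $P(j\in\hat{\mathcal{S}}\cap\hat{\mathcal{V}})\to 0$. Case 2: $j\in\mathcal{S}$ but $j\notin\mathcal{V}$. Then the partition $\mathcal{P}_j$ violates the conditional mean independence null, so the target parameter $\theta_{0,j}$ for that partition is nonzero. Applying Corollary \ref{coro:test} to $\mathcal{P}_j$ (which inherits the hypotheses of Theorem \ref{theorem1}) yields $P(|\sqrt{n}\hat{\theta}_j/\hat{\sigma}_j|>c_\alpha)\to 1$, i.e.\ $P(j\in\hat{\mathcal{V}})\to 0$, and again $P(j\in\hat{\mathcal{S}}\cap\hat{\mathcal{V}})\to 0$. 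Taking a union bound over the at most $p$ indices $j\notin\mathcal{S}\cap\mathcal{V}$ closes the argument.

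The step I expect to be the main obstacle is a bookkeeping one rather than a substantive one: one must verify that the regularity conditions of Assumption \ref{assnorm} hold not only for one distinguished partition but for every partition $\mathcal{P}_j$ with $j\in\mathcal{S}$ considered in the iteration, so that Theorem \ref{theorem1} and Corollary \ref{coro:test} can be invoked uniformly across $j$. This is the standard cost of re-learning nuisance functions when the roles of $Z$ and $X$ are rotated, but it must be made explicit. Under that uniform regularity, the union bound loses only a factor of $p$, which is harmless since $p$ is fixed, and the conclusion $\lim_{n\to\infty}P(\hat{\mathcal{P}}_{pass}\subseteq\mathcal{P}^*)=1$ follows. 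Note that the statement does not require the converse inclusion $\mathcal{P}^*\subseteq\hat{\mathcal{P}}_{pass}$; the level-$\alpha$ nature of the test means we only control the type-1 error in the direction claimed, which is precisely why the theorem is phrased as a one-sided set inclusion.
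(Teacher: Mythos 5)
Your proposal is correct and follows essentially the same route as the paper: reduce to the two components via the assumed first-stage consistency $P(\hat{\mathcal{S}}=\mathcal{S})\to 1$, then invoke Corollary \ref{coro:test} to show that any $j\notin\mathcal{V}$ is rejected (hence excluded from $\hat{\mathcal{V}}$) with probability tending to one. Your write-up is in fact slightly more careful than the paper's, making the finite union bound over $j$ and the requirement that Assumption \ref{assnorm} hold for every rotated partition explicit, both of which the paper leaves implicit.
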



 Theorem \ref{prop:all} states that $\lim\limits_{n\rightarrow\infty}P(\mathcal{V}\neq\emptyset)=1$
 if $\hat{\mathcal{V}}\neq \emptyset$. Hence, the type 1 error of our proposed identification test in \eqref{identification} goes to zero as $n\rightarrow \infty$. This means that if our test finds identification ($\hat{\mathcal{P}}_{pass}\neq \emptyset$), there is identification with probability $1$ for large $n$. The next theorem provides insights about the type 2 error of our test, i.e., how likely it is that we can find identification if there is identification. Theorem \ref{power} states that the type 2 error is at least bounded by $\alpha$.
 \begin{theorem}\label{power}
    Assume that $\mathcal{P}^*\neq\emptyset$ (testable identification). Under the assumptions of Theorem \ref{theorem1}, assuming $\lim\limits_{n\rightarrow\infty} P(\hat{S}=\mathcal{S}) = 1$, for a given $\alpha$, it holds
$$\lim\limits_{n\rightarrow\infty}P(\hat{\mathcal{P}}_{pass}\neq\emptyset)\ge 1-\alpha.$$
\end{theorem}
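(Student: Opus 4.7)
The plan is to exhibit a single partition that, with high probability, survives both selection steps and use this to lower bound $P(\hat{\mathcal{P}}_{pass}\neq\emptyset)$. Since $\mathcal{P}^*\neq\emptyset$ by assumption, fix some $j^*\in\mathcal{S}\cap\mathcal{V}$. The key observation is that $\hat{\mathcal{P}}_{pass}\neq\emptyset$ whenever $j^*\in\hat{\mathcal{S}}\cap\hat{\mathcal{V}}$, so it suffices to lower bound the probability of this joint event.

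First, I would handle the first-stage selection: since $j^*\in\mathcal{S}$ and we are given $\lim_{n\rightarrow\infty}P(\hat{\mathcal{S}}=\mathcal{S})=1$, a direct inclusion argument gives $\lim_{n\rightarrow\infty}P(j^*\in\hat{\mathcal{S}})=1$. Next, I would control the conditional independence test for this particular $j^*$. Because $j^*\in\mathcal{V}$, the null $\theta_{0,j^*}=0$ holds for the score $\psi$ defined at the partition $\mathcal{P}_{j^*}$. Theorem \ref{theorem1} applied to this partition then yields
\begin{equation*}
\sqrt{n}\,\hat{\sigma}_{j^*}^{-1}\hat{\theta}_{j^*}\rightsquigarrow N(0,1),
\end{equation*}
uniformly over $\mathbb{P}\in\mathcal{P}$. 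By continuity of the standard normal CDF at $\pm c_\alpha$ with $c_\alpha=\Phi^{-1}(1-\alpha/2)$, this implies
\begin{equation*}
\lim_{n\rightarrow\infty}P\bigl(|\sqrt{n}\hat{\sigma}_{j^*}^{-1}\hat{\theta}_{j^*}|<c_\alpha\bigr)=1-\alpha,
\end{equation*}
i.e.\ $\lim_{n\rightarrow\infty}P(j^*\in\hat{\mathcal{V}})=1-\alpha$.

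To combine, I would apply a simple union-bound style argument: for any events $A_n,B_n$,
\begin{equation*}
P(A_n\cap B_n)\ge P(A_n)+P(B_n)-1.
\end{equation*}
Setting $A_n=\{j^*\in\hat{\mathcal{S}}\}$ and $B_n=\{j^*\in\hat{\mathcal{V}}\}$, the right-hand side converges to $1+(1-\alpha)-1=1-\alpha$, so $\liminf_{n\rightarrow\infty}P(j^*\in\hat{\mathcal{S}}\cap\hat{\mathcal{V}})\ge 1-\alpha$. Since $\{j^*\in\hat{\mathcal{S}}\cap\hat{\mathcal{V}}\}\subseteq\{\hat{\mathcal{P}}_{pass}\neq\emptyset\}$, monotonicity of probability delivers the claimed bound.

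The routine parts are the asymptotic normality invocation and the union bound. The only mildly delicate step is making sure Theorem \ref{theorem1} can be applied at the specific partition $\mathcal{P}_{j^*}$ simultaneously with the event $\{\hat{\mathcal{S}}=\mathcal{S}\}$; because the first-stage selection and the conditional-independence test statistic are both consistent/asymptotically normal under $\mathbb{P}$ and the two conclusions are each stated marginally, the union-bound combination is what allows us to sidestep any joint-distribution analysis. The $\alpha$ loss in the bound is tight in the sense that it reflects the nominal size of the per-instrument test, so no refinement beyond this argument is needed.
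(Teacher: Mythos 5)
Your proof is correct and follows essentially the same route as the paper: both arguments reduce the claim to the fact that a single fixed instrument $j^*\in\mathcal{S}\cap\mathcal{V}$ passes the conditional-independence test with probability tending to $1-\alpha$ (the paper phrases this as $P(\cap_j A_j)\le P(A_{j^*})\to\alpha$ for the complementary event, you phrase it directly via the inclusion $\{j^*\in\hat{\mathcal{S}}\cap\hat{\mathcal{V}}\}\subseteq\{\hat{\mathcal{P}}_{pass}\neq\emptyset\}$). If anything, your explicit union-bound combination of the first-stage event $\{j^*\in\hat{\mathcal{S}}\}$ with the test event is slightly more careful than the paper, which folds the first-stage selection into the argument implicitly.
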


If $|\hat{\mathcal{P}}_{pass}| > 1$ such that there is more than one sufficiently strong candidate IV while conditional independence is not rejected, 
we select the final partition as the one which maximizes the \texttt{p}-value when testing conditional independence:
\begin{equation}\label{eq:pmax}
\hat{\mathcal{P}}_{pmax} = \underset{\mathcal{P}_j \in \hat{\mathcal{P}}_{pass}}{argmax} \ \ \texttt{p}(\mathcal{P}_j), 
\end{equation}
where $\texttt{p}(\mathcal{P}_j)$ is the $\texttt{p}$-value obtained in the conditional independence test of $Z$ and $Y$. 
Finally, Algorithm \ref{algo:procedure} in the appendix describes the steps of our method by means of pseudo-code and we have added a discussion of the computational cost in appendix \ref{app:computationalcost}. 

\section{Simulation study}\label{simulation}

In this section, we briefly summarize the results of the simulation study. The detailed settings and the full results can be found in Appendix \ref{app:simulation}. 
We consider two main settings: one with a single valid IV and one with multiple valid IVs. These are then again split into two settings, one with binary and one with continuous instruments. The outcome equation is linear, while we choose a linear index model to generate $D$. We encode the violations of Assumptions \ref{ass3} and \ref{ass2} via parameters in this two-stage model. Variables in $Q$ are correlated and error terms are standard normal. We then vary the sample sizes, setting them to $n=1000, 4000$ and $16,000$. We use the LASSO for all ML steps, to illustrate our algorithm.

Our results can be summarized as follows: first, we observe that with no violations of SOO and validity, as $n$ increases, the probability of $\hat{\mathcal{P}}_{pass}$ being non-empty, finding identification, increases. The probability of correctly selecting a valid IV in the final partition increases, while that of incorrectly selecting a confounder decreases. These results are especially clear when multiple valid IVs are available. When there is no violation, we expect the violation parameter, $\hat{\theta}$, to be close to zero over the repetitions, and this is also what we observe in the results. 
Secondly, we model violations of SOO (A4), through an unobserved confounder, and IV validity (A5), through a violation of the exclusion restriction. When one of these is violated, across all settings the probability of $\hat{\mathcal{P}}_{pass}$ being empty quickly increases with sample size. The latter results are particularly clear when the violation parameter is reasonably large as is the case with violations of IV validity in our simulation. 

Overall, our procedure consistently selects the correct IV(s) when Assumptions \ref{ass1} and \ref{ass2} are satisfied, and the algorithm correctly concludes there is no identification in the case of violations of SOO or IV validity. The method seems to perform particularly well when $n$ is large, when violations are clearly separated from zero and when there are multiple candidate IVs. 


\begin{table}[htbp]
\begin{center}
\caption{Empirical Application}
	\label{tab:ohi}
    \begin{footnotesize}
\begin{tabular}{lccc}
  \hline
  \multicolumn{4}{l}{\textbf{PANEL A: Primary Care Visits}} \\
Method & $\hat\theta$ & se & \texttt{p}-value \\ 
  \hline
  LASSO & 0.000 & 0.000 & 0.997 \\ 
  Random Forest & -0.000 & 0.013 & 0.999 \\ 
  XGBoost & -0.070 & 0.207 & 0.735 \\ 
  \hline
    \multicolumn{4}{l}{\textbf{PANEL B: Number of Prescriptions}} \\
  \hline
LASSO & 0.001 & 0.000 & 0.148 \\ 
  Random Forest & -0.007 & 0.010 & 0.472 \\ 
  XGBoost & 0.073 & 0.088 & 0.407 \\ 
   \hline
\end{tabular}
\end{footnotesize}
\end{center}
\par
{\scriptsize Notes: this table reports the estimate $\hat\theta$ from eq. \eqref{eq:test1} with five folds ($K=5$) when using the doubly robust score functions in eqs. \eqref{score3} and \eqref{score4}. `se` and `\texttt{p}-value` report the standard error and reported \texttt{p}-value for estimate $\hat\theta$ on the random program assignment variable.  
}
\end{table}

\vspace{-0.5cm}
\section{Empirical application}\label{application}

We apply our method to the Oregon Health Insurance Experiment, in which low-income adults were randomly selected by lottery to be eligible to apply for Medicaid. Oregon launched the lottery in early 2008 for about 90,000 participants, with notifications through October 2009. Previous work uses random assignment as an IV for insurance coverage and finds increases in healthcare utilization, reductions in out-of-pocket expenditures, and improvements in self-reported health.\footnote{See \citet{finkelstein2012oregon, baicker2013oregon, taubman2014medicaid, finkelstein2019value}}

We apply our testing methodology to the experimental OHI data, using the sample definition adopted by \citet{finkelstein2012oregon}, which yields 23,762 observations. The treatment $D$ indicates whether an individual ever enrolled in Medicaid by October 2009. While assignment is randomized, enrollment may be selective due to non-compliance. 
 We consider two outcomes: the number of primary care visits and the number of prescription medications. Along with random assignment as a natural IV candidate, the vector 
$Q$ contains 218 pre-assignment characteristics, including demographics, socioeconomic variables, and pre-treatment health, utilization, and expenditure measures, as well as indicators for missing values.

Testing is based on the cross-fitted estimator $\hat\theta$ of eq. \eqref{eq:test1} with five folds ($K=5$) when using the doubly robust score functions in eqs. \eqref{score3} and \eqref{score4} in the case of binary and continuous IVs, respectively. In the case of a continuous candidate IV, we make use of the score function outlined in Appendix \ref{app:extensionmulti} for multivalued IVs. We partition the support of the continuous candidate IV using quartiles. As in the simulations, nuisance functions are estimated using LASSO, with random forest and gradient boosting as alternatives. We drop observations with extremely low or high propensity scores, which cause estimation instability and lead to high variance.\footnote{For a discussion of such trimming based on the propensity score, see, e.g., \citet{crump2009dealing} and \citet{lechner2019practical}. Specifically, we discard observations for which the estimated propensity score \( p_l(D, X) = P(Z \in Z_l \mid D, X) \), defined for a partition \( Z_l \) of the candidate IV (whether continuous or discrete), falls outside the interval \( 0.01 < p_l(D, X) < 0.99 \). We require that no more than 5\% of observations be trimmed using this rule in order for the candidate IV to be considered.} 

Results are reported in Table \ref{tab:ohi}. The algorithm selects random program assignment as the only valid IV, with $\hat\theta$ close to zero. 
Although multiple candidates enter $\hat{S}$ after the first stage, none satisfy the trimming requirement except random assignment.\footnote{We report every variable selected in the first stage to be included in $\hat{S}$, sorted by decreasing \texttt{p}-value, in Appendix Table \ref{tab:application_allcovar}, as well as the number of observations dropped following the trimming rule.} Using the 30\% threshold, random assignment passes the test across all learners for the doctor visits outcome. For LASSO, the estimate is zero with a \texttt{p}-value of 99.7\%, supporting both IV validity and the SOO assumption given the observed covariates. On the one hand, this indicates validity of the IV, implying not only the randomness of program assignment inherent to the experimental design, but also that the assignment does not directly affect the earnings outcome other than through the treatment (for example, through motivation or disappointment when being or not being eligible for the program). On the other hand, the testing result suggests that the SOO assumption holds for the treatment when controlling for the pre-assignment covariates available in the data. For the prescriptions outcome, results are less conclusive: random forest and XGBoost yield \texttt{p}-values above 30\%, while LASSO does not. Consequently, given the covariates, our result suggests that we may evaluate the average treatment effect (ATE) on the total population for the doctor visits outcome. In contrast, employing an IV-based approach utilizing random assignment as the IV for effect estimation would, under specific additional assumptions like treatment monotonicity in the IV, only permit assessing the local average treatment effect (LATE) on the subpopulation of compliers, whose training participation aligns with the random assignment \citep{Imbens+94}. 


\section{Conclusion}\label{co}

In this paper, we introduced an ML-based algorithm based on a novel doubly robust score function designed to detect and test, in a data-adaptive manner, the presence of control variables sufficient for identifying treatment effects in observational data, as well as variables satisfying IV validity. Treatment effects may be heterogeneous, but identification relies on a conditional mean restriction and therefore pertains to the ATE rather than the full distribution of treatment effects. 
The method searches over partitions of variables into candidate controls and an IV and tests a conditional mean-independence implication that must hold when the SOO assumption and IV validity jointly hold true. This represents an important advancement over previous work which relies on a priori assumptions about whether a variable is an IV or a control. 
We demonstrated that the method consistently detects controls and IVs (if they exist) both through theory and simulations. Moreover, an application to the OHI Experiment confirms that our algorithm correctly selects random assignment into the program as IV, across various ML algorithms for the nuisance function learners and in line with what a researcher would expect. This shows that our new algorithm can be applied as explorative method, where the researcher wants to learn the partition from the data, or as a confirmatory method. 

\bibliographystyle{apalike}
\bibliography{research.bib}

\appendix

\section{Literature review}\label{app:litrev}

Our paper contributes to a growing literature on testing identifying assumptions for causal inference. For instance, \citet{deLunaJohansson2012} and \citet{BlackJooLaLondeSmithTaylor2015} make use of the same conditional independence of the IV as considered here to test the SOO assumption for the treatment when assuming a valid IV, based on matching or regression estimators.\footnote{See also the related test by \citet{chen2018testing}, which (in contrast to other methods) requires symmetrically distributed error terms. Furthermore, \citet{Angrist2004}, \citet{BrinchMogstadWiswall2012}, and \citet{Huber2013} assume IV validity to hold unconditionally without controlling for covariates, in order to test the unconditional independence of the treatment and potential outcomes. \citet{BertanhaImbens2015} consider the fuzzy regression discontinuity design, where IV validity holds at a cutoff of a running variable which discontinuously affects treatment assignment.} 
 In contrast, HK jointly test both IV validity and SOO assumptions with pre-defined instruments and covariates using ML approaches which permit for high-dimensional covariates.\footnote{\citet{angrist2017leveraging} also propose a joint test for parametric models with low-dimensional covariates.}  
 This is conceptually closest to our approach, but one important difference is that HK require specifying the sets of supposed IVs and covariates to be used for testing, whereas in this paper, these sets of covariates and IVs are learned from the data and may therefore be a priori unknown.

Moreover, we provide a new orthogonalized quadratic score, which improves upon the quadratic score used in HK. This new score is of broader interest since it can be applied in many other contexts beyond ours. One example is the study by \citet{parikh2024double}, who compare experimental and non-experimental treatment effect estimates and use an indicator to distinguish between experimental and non-experimental evaluation designs, which plays a role comparable to the IV in our paper. Imposing external validity of the experiment allows to assess the SOO assumption in the non-experimental design. Conversely, imposing the SOO assumption in the non-experimental design allows to assess the external validity of the experiment, which is equivalent to testing IV validity in our context. Unlike our paper, the authors do not consider testing both assumptions jointly.\footnote{Similarly, \citet{angrist2015wanna} employ the same conditional independence of the IV as considered here to test IV validity within the framework of the sharp regression discontinuity design, where SOO for the treatment holds by design, as it is a deterministic function of a cutoff in a running variable. This allows testing whether the running variable is a valid IV, i.e., whether it is not associated with the outcome conditional on the treatment.}

Relatedly, one strand of the statistics literature imposes the SOO assumption in observational studies and exploits the conditional independence condition to identify subsets of covariates that are sufficient for identification, implying that the remaining covariates satisfy IV validity; see, e.g, \citet{de2011covariate} and \citet{vanderweele2011new}. Considering subsets of covariate information may also allow for more efficient causal effect estimation. For example, \citet{christgau2024efficientadjustmentcomplexcovariates} propose a deep learning-based estimator that learns efficient covariate representations from unstructured data and is asymptotically normal. Again, our approach differs from these studies in that it tests the SOO and IV validity assumptions jointly, rather than testing one and assuming the other. Closer to us, \citet{entner2013data} do not pre-impose the SOO assumption when searching for a sufficient set of covariates to control for based on conditional independence and consider a parametric modelling approach for testing, however, without asymptotic guarantees. Here, we suggest a ML-based procedure that allows for nonparametric models and potentially high-dimensional covariates, while having desirable asymptotic properties. 

Our study is also related to several contributions in the literature on causal discovery.\footnote{See, e.g.,\ \citet{KalischB2014}, \citet{peters2017elements}, \citet{Glymouretal2019} for reviews.} \citet{peters2015causal} make use of pre-defined IVs to learn which of the observed variables are treatments in the sense that they directly affect the outcome, assuming that these treatments satisfy the SOO assumption. The approach makes use of IVs in a way that may entail the rejection of treatments which violate the SOO assumptions, thereby providing power to detect identification failures. Our approach is different in that it focuses on a single, pre-defined treatment of interest, tests SOO and IV validity jointly, and learns the sets of covariates and IVs from the data. 

\citet{soleymani2022causal} and \citet{quinzan2023drcfs} provide algorithms that select treatments, while also controlling for observed covariates based on the double machine learning (DML) framework of \citet{doubleML}.\footnote{The idea is to sequentially consider each of the observed variables as treatment variable, while considering all remaining variables as covariates to estimate the direct effect of each candidate treatment on the outcome by DML. The algorithm retains only those variables that exhibit statistically significant effects on the outcome.} 
In contrast to our approach, these algorithms do not exploit IVs to test identifying assumptions, but assume the SOO assumption holds for all treatments.

Another domain of causal discovery related to our study is Y-learning \citep[e.g. ][]{SevillaMayn2021}. Conditional on covariates, Y-learning implies that if two variables are independent of each other when not controlling for the treatment, statistically associated with each other when controlling for the treatment, and  both independent of the outcome when controlling for the treatment, then these two variables are relevant and valid IVs. A further implication is that the SOO assumption holds. Our approach differs from Y-learning in that it imposes more causal structure by assuming that potential IVs and covariates are not affected by the treatment. For this reason, our method only requires a single (and a priori unknown) instrument, while Y-learning hinges on the existence of (at least) two IVs. 

Another strand of the IV validity and causal discovery literature exploits testable implications of structural causal models. These implications arise either from rank (tetrad/trek) constraints on the covariance matrix in linear settings, or from distributional assumptions such as non-Gaussianity of the errors. \citet{Kuroki2005Instrumental} use tetrad (rank) constraints to derive testable implications of IV models. Their approach requires multiple valid IVs (in particular, overidentifying tetrad restrictions arise when at least three valid IVs are available) and the resulting constraints are necessary but not generally sufficient for validity. \citet{Silva2017Learning} propose IV discovery methods based on tetrad constraints and non-Gaussianity in linear non-Gaussian acyclic models (LiNGAM). Their IV-TETRAD procedures require at least two valid IVs to operate and therefore do not apply in settings where only a single IV is valid. Their algorithms return equivalence classes of causal effects. \citet{Xie2022Testability}, in contrast, derive testable constraints in linear non-Gaussian acyclic models that can rule out invalid IVs. 
In contrast to these papers, \citet{Guo2024Testability} consider an additive non-linear non-constant effects model with non-Gaussian errors when exogenous covariates are present. They propose the auxiliary-based independence test condition, which provides a necessary condition for IV validity and becomes sufficient under additional assumptions. \citet{Wiedermann2026Testing} shows how to test confoundedness of a single IV in a linear model with non-Gaussian errors, but can not test the exclusion restriction. Our method is different from these approaches in that we do not make assumptions on the outcome model or the distribution of the errors and use a test based on mean independence. Moreover, we provide a general, doubly robust, machine learning-based score to implement the test. 

Our approach is not intended as a general causal discovery or structure learning method. The causal ordering and structural assumptions are taken as given based on the researcher’s substantive knowledge. Within this fixed structure, the procedure tests whether the data support identification of the ATE via the SOO assumption and the existence of a valid instrument.

The computer science literature closest to us is on representation learning of variable sets. \citet{hassanpour2019learning} and \citet{wu2021learning}, for instance, propose deep learning algorithms minimizing global loss functions to simultaneously decompose pre-treatment variables into IVs, confounders, and outcome predictors. Yet, there are several differences between their studies and ours: First, their approaches impose SOO a priori to isolate IVs (whose exclusion from treatment effect estimation can increase efficiency) based on the same conditional independence condition as considered in our paper. However, we exploit the conditional independence condition to test SOO and IV validity jointly. Second, their algorithms, which minimize a global loss function, aim at learning the full set of IVs, which is attractive for maximising efficiency in treatment effect estimation but might be very ambitious to achieve in a finite sample. In contrast, our algorithm pursues the more modest goal of detecting at least one valid IV—a sufficient condition for identification. Third, we demonstrate the consistency of our algorithm in selecting valid IVs under certain regularity conditions, while the asymptotic behaviour of the deep learning methods in \citet{hassanpour2019learning} and \citet{wu2021learning} has not been derived.

We also contribute to a growing literature in statistical learning and econometrics that tries to separate valid from invalid IVs, see for instance \citet{Kang2016Instrumental}, \citet{Guo2018Confidence}, \citet{Windmeijer2021Confidence}, \citet{Windmeijer2019}, \citet{apfel2024agglomerative}, and \citet{apfel2022detecting}. These approaches rely on the assumption that a majority or plurality of (a priori unknown) IVs  is valid. In order to detect them Sargan-type tests in combination with IV-based estimators are used. In contrast, our approach does not pre-impose the existence of valid IVs, but tests validity (for a single IV) and SOO assumptions jointly, in order to apply estimation based on the SOO assumption. Moreover, these methods impose a linear model, which we do not require in our method.

Finally, other recent papers also rely on the majority and plurality assumptions from the literature discussed in the preceding paragraph. 
Kuang et al. (2020) propose Ivy, a method to synthesize information from multiple possibly weak and invalid IVs into a single summary IV, relying on the majority assumption. Hartford et al. (2021) propose modeIV, which uses a plurality assumption in nonlinear models to aggregate IV estimates that cluster around a modal value and therefore relies on homogeneity to interpret deviations as violations of validity.
While our procedure also involves sequential testing steps, its objective is fundamentally different from iterative IV screening or aggregation methods such as Ivy and modeIV. Our approach is not designed to construct an IV estimator, but to test whether the identifying assumptions required for ATE estimation via covariate adjustment are supported by the data. Accordingly, we do not rely on majority or plurality assumptions, nor do we assume homogeneity. Instead, we test whether there exists at least one valid instrument that jointly supports instrument validity and the selection-on-observables assumption.

\section{Proof of moment condition and Neyman orthogonality of $\tilde{\psi}$}\label{proofNeyman}

Equation \eqref{score3} suggests the following score function for testing when $Z$ is binary: 
\begin{align}\label{score3a}
&\quad \tilde{\psi}(W,\theta,\eta)\\
&=(\mu(D,X,1)-\mu(D,X,0))^2\notag\\&
+2(\mu(D,X,1)-\mu(D,X,0))\left(\frac{(Y-\mu(D,X,1))\cdot Z}{p(D,X)}\notag
-\frac{(Y-\mu(D,X,0))\cdot (1-Z)}{1-p(D,X)}\right)\\
&+\mu(D,X,1)-\mu(D,X,0)+\left(\frac{(Y-\mu(D,X,1))\cdot Z}{p(D,X)}\notag
-\frac{(Y-\mu(D,X,0))\cdot (1-Z)}{1-p(D,X)}\right)\notag\\
&-\theta\\
&:=\tilde{\psi}_1(W,\theta,\eta)+\tilde{\psi}_2(W,\theta,\eta)-\theta\notag
\end{align}
with
\begin{align*}
&\quad\tilde{\psi}_1(W,\theta,\eta)\\
&=(\mu(D,X,1)-\mu(D,X,0))^2\notag\\&+2(\mu(D,X,1)-\mu(D,X,0))\left(\frac{(Y-\mu(D,X,1))\cdot Z}{p(D,X)}\notag
-\frac{(Y-\mu(D,X,0))\cdot (1-Z)}{1-p(D,X)}\right)
\end{align*}
and
$$\tilde{\psi}_2(W,\theta,\eta)=\mu(D,X,1)-\mu(D,X,0)+\left(\frac{(Y-\mu(D,X,1))\cdot Z}{p(D,X)}\notag
-\frac{(Y-\mu(D,X,0))\cdot (1-Z)}{1-p(D,X)}\right)\notag.$$
The moment condition $\mathbb{E}[\tilde{\psi}(W,\theta_0,\eta_0)]=0$ holds, because 
\begin{align*}
\mathbb{E}\left[(\mu_0(D,X,1)-\mu_0(D,X,0))^2+
(\mu_0(D,X,1)-\mu_0(D,X,0))\right]
-\theta=0
\end{align*}
and
\begin{align*}
\mathbb{E}\left[\left(\frac{(Y-\mu_0(D,X,1))\cdot Z}{p_0(D,X)}\notag
-\frac{(Y-\mu_0(D,X,0))\cdot (1-Z)}{1-p_0(D,X)}\right)\right]=0, 
\end{align*}
as
\begin{align*}
\mathbb{E}\left[\frac{(Y-\mu_0(D,X,1)) Z}{p_0(D,X)}\right]&=\mathbb{E}\left[\frac{Z}{p_0(D,X)}\mathbb{E}\left[Y-\mu_0(D,X,1)|D,X,Z\right]\right]\\
&=P(Z=1)\mathbb{E}\left[\frac{1}{p_0(D,X)}\left(\mathbb{E}\left[Y|D,X,Z\right]-\mu_0(D,X,1)\right)\bigg|Z=1\right]\\
&=P(Z=1)\mathbb{E}\left[\frac{1}{p_0(D,X)}\left(\mathbb{E}\left[Y|D,X,Z=1\right]-\mu_0(D,X,1)\right)\bigg|Z=1\right]=0
\end{align*}
and analogously $\mathbb{E}\left[\frac{(Y-\mu_0(D,X,0))\cdot (1-Z)}{1-p_0(D,X)}\right]=0$.
Furthermore, by the same argument, we have
\begin{align*}
\mathbb{E}\left[(\mu_0(D,X,1)-\mu_0(D,X,0))\left(\frac{(Y-\mu_0(D,X,1))\cdot Z}{p_0(D,X)}
-\frac{(Y-\mu_0(D,X,0))\cdot (1-Z)}{1-p_0(D,X)}\right)\right]=0.
\end{align*}
Neyman orthogonality of $\tilde{\psi}$ can be shown by taking the Gateaux derivates w.r.t.\ the nuisance parameters: 
\allowdisplaybreaks{
\begin{align*}
&\quad\partial_rE[\tilde{\psi}_1(W,\theta_0,\eta_0 + r(\eta-\eta_0)]\big|_{r=0}\\
&=E\left[\partial_r\tilde{\psi}_1,(W,\theta_0,\eta_0 + r(\eta-\eta_0))\big|_{r=0}\right]\\
&=2E\left[((\mu_0(D,X,1)-\mu_0(D,X,0))((\mu(D,X,1)-\mu_0(D,X,1))-(\mu(D,X,0)-\mu_0(D,X,0))) \}\right]\\
&\quad-2E\left[(\mu_0(D,X,1)-\mu_0(D,X,0))\frac{Z(\mu(D,X,1)-\mu_0(D,X,1))}{p_0(D,X)}\right]\\
&\quad+2E\left[(\mu_0(D,X,1)-\mu_0(D,X,0))\frac{(1-Z)(\mu(D,X,0)-\mu_0(D,X,0))}{1-p_0(D,X)}\right]\\
&\quad-2E\left[(\mu_0(D,X,1)-\mu_0(D,X,0))\frac{Z(Y-\mu_0(D,X,1))(p(D,X)-p_0(D,X))}{p_0(D,X)^2}\right]\\
&\quad-2E\left[(\mu_0(D,X,1)-\mu_0(D,X,0))\frac{(1-Z)(Y-\mu_0(D,X,0))(p(D,X)-p_0(D,X))}{(1-p_0(D,X))^2}\right]\\
&\quad+2E\left[(\mu_0(D,X,1)-\mu_0(D,X,0))\frac{(1-Z)(Y-\mu_0(D,X,0))(p(D,X)-p_0(D,X))}{(1-p_0(D,X))^2}\right]\\
&\quad+2E\bigg[((\mu(D,X,1)-\mu_0(D,X,1))-(\mu(D,X,0)-\mu_0(D,X,0)))\\
&\quad\left(\frac{(Y-\mu_0(D,X,1))\cdot Z}{p_0(D,X)}
-\frac{(Y-\mu_0(D,X,0))\cdot (1-Z)}{1-p_0(D,X)}\right)\bigg]\\
&=2E\left[((\mu_0(D,X,1)-\mu_0(D,X,0))((\mu(D,X,1)-\mu_0(D,X,1))-(\mu(D,X,0)-\mu_0(D,X,0))) \}\right]\\
&\quad-2E\left[(\mu_0(D,X,1)-\mu_0(D,X,0))\frac{Z(\mu(D,X,1)-\mu_0(D,X,1))}{p_0(D,X)}\right]\\
&\quad+2E\left[(\mu_0(D,X,1)-\mu_0(D,X,0))\frac{(1-Z)(\mu(D,X,0)-\mu_0(D,X,0))}{1-p_0(D,X)}\right]\\
&=2E\left[((\mu_0(D,X,1)-\mu_0(D,X,0))((\mu(D,X,1)-\mu_0(D,X,1))-(\mu(D,X,0)-\mu_0(D,X,0))) \}\right]\\
&\quad-2E\left[(\mu_0(D,X,1)-\mu_0(D,X,0))(\mu(D,X,1)-\mu_0(D,X,1))\mathbb{E}\left[\frac{Z}{p_0(D,X)}\Bigg|D,X\right]\right]\\
&\quad+2E\left[(\mu_0(D,X,1)-\mu_0(D,X,0))(\mu(D,X,0)-\mu_0(D,X,0))\mathbb{E}\left[\frac{(1-Z)}{1-p_0(D,X)}\Bigg|D,X\right]\right]\\
&\quad=0
\end{align*}
}
and
\begin{align*}
&\quad\partial_r\mathbb{E}[\tilde{\psi}_2(W,\theta_0,\eta_0 + r(\eta-\eta_0)]\big|_{r=0}\\
&=\mathbb{E}\left[\partial_r\tilde{\psi}_2,(W,\theta_0,\eta_0 + r(\eta-\eta_0))\big|_{r=0}\right]\\
&=\mathbb{E}\left[(\mu(D,X,1)-\mu_0(D,X,1))\right]-\mathbb{E}\left[(\mu(D,X,0)-\mu_0(D,X,0))\right]\\
&\quad-\mathbb{E}\left[\frac{Z(\mu(D,X,1)-\mu_0(D,X,1))}{p_0(D,X)}\right]\\
&\quad+\mathbb{E}\left[\frac{(1-Z)(\mu(D,X,0)-\mu_0(D,X,0))}{1-p_0(D,X)}\right]\\
&\quad-\mathbb{E}\left[\frac{Z(Y-\mu_0(1,D,X))(p(D,X)-p_0(D,X))}{p_0(D,X)^2}\right]\\
&\quad-\mathbb{E}\left[\frac{(1-Z)(Y-\mu_0(0,D,X))(p(D,X)-p_0(D,X))}{(1-p_0(D,X))^2}\right]=0
\end{align*}
since $\mathbb{E}[Z|D,X]=p_0(D,X)$, $\mathbb{E}[Z(Y-\mu_0(D,X,1))|D,X]=0$ and $\mathbb{E}[(1-Z)(Y-\mu_0(D,X,0))|D,X]=0$.

\section{Extensions to multivalued IVs}\label{app:extensionmulti}

In this appendix, we adapt the score function in eq. \eqref{score3} to multivalued IVs $Z$, which is also a key contribution of this paper. In the case of a continuous $Z$, this requires discretizing its values in some parts of the score function. To this end, let $l=1,\dots,L$ be a partition of its support $\mathcal{Z}$ with $\cup_{l}Z_l=\mathcal{Z}$. For a discrete IV, $Z_l=z_l$, $l=1,\dots,L$, would be any value $Z$ can take with probability $\mathbb{P}(Z=z_l)>c$, with $c>0$. For a continuous IV, such a partition may be generated based on the quantile function (e.g. percentiles) of $Z$. Let $1(Z\in Z_l)$ denote the indicator function, which is one if $Z$ falls into the partition $Z_l$ and else is zero, and $p_l(D,X)=\mathbb{P}(Z\in Z_l|D,X)$ denote the corresponding IV propensity score.
Then, testing with a multivalued $Z$ can be based on the following score function:
\begin{align}\label{score4}
&\quad \psi(W,\theta,\eta)\\
&=\sum_{l=1}^L(\mu(D,X,Z\in Z_l)-\mu(D,X,Z\notin Z_l))^2\notag\\&
+\sum_{l=1}^L2(\mu(D,X,Z\in Z_l)-\mu(D,X,Z\notin Z_l))\notag\\
&\left(\frac{(Y-\mu(D,X,Z\in Z_l)) 1(Z\in Z_l)}{p_l(D,X)}
-\frac{(Y-\mu(D,X,Z\notin Z_l))1(Z\notin Z_l)}{1-p_l(D,X)}\right)\notag\\&
+\sum_{l=1}^L(\mu(D,X,Z\in Z_l)-\mu(D,X,Z\notin Z_l))\notag\\&
+\sum_{l=1}^L\left(\frac{(Y-\mu(D,X,Z\in Z_l)) 1(Z\in Z_l)}{p_l(D,X)}
-\frac{(Y-\mu(D,X,Z\notin Z_l)) 1(Z\notin Z_l)}{1-p_l(D,X)}\right)\notag.
\end{align}
This score has a variance that is bounded away from zero, is zero in expectation under the null hypothesis, $\theta_0=0$, and is Neyman-orthogonal, as formally shown in Appendix \ref{proofNeyman2}. We may construct cross-fitted estimators of $\theta_0$ based on the score function \eqref{score4}. It is worth noting that the corresponding target parameter in \eqref{score4} is given by
\begin{align*}
    \mathbb{E}\left[\sum_{l=1}^L[(\mu_0(D,X,Z\in Z_l)-\mu_0(D,X,Z\notin Z_l))^2+(\mu_0(D,X,Z\in Z_l)-\mu_0(D,X,Z\notin Z_l))]\right],
\end{align*}
which tests the null hypothesis given in eq. \eqref{nullhyp1} for binary and discrete IVs, and approximates eq. \eqref{nullhyp1} in the case of continuous IVs if the bins defining $Z_l$ become small.

\section{Proof of moment condition and Neyman orthogonality of $\psi$}\label{proofNeyman2}
Equation \eqref{score4} suggests the following score function for testing when $Z$ is multivalued discrete or continuous:  
\begin{align}
&\quad \psi(W,\theta,\eta)\\
&=\sum_{l=1}^L(\mu(D,X,Z\in Z_l)-\mu(D,X,Z\notin Z_l))^2\notag\\&
+\sum_{l=1}^L2(\mu(D,X,Z\in Z_l)-\mu(D,X,Z\notin Z_l))\notag\\
&\left(\frac{(Y-\mu(D,X,Z\in Z_l)) 1(Z\in Z_l)}{p_l(D,X)}
-\frac{(Y-\mu(D,X,Z\notin Z_l))1(Z\notin Z_l)}{1-p_l(D,X)}\right)\notag\\&
+\sum_{l=1}^L(\mu(D,X,Z\in Z_l)-\mu(D,X,Z\notin Z_l))\notag\\&
+\sum_{l=1}^L\frac{(Y-\mu(D,X,Z\in Z_l)) 1(Z\in Z_l)}{p_l(D,X)}
-\frac{(Y-\mu(D,X,Z\notin Z_l)) 1(Z\notin Z_l)}{1-p_l(D,X)}-\theta\notag\\
&:=\psi_1(W,\theta,\eta)+\psi_2(W,\theta,\eta)-\theta\notag.
\end{align}
First, we show that the moment condition holds. By definition, we have
\begin{align*}
\sum_{l=1}^L(\mu(D,X,Z\in Z_l)-\mu(D,X,Z\notin Z_l))^2+\sum_{l=1}^L(\mu(D,X,Z\in Z_l)-\mu(D,X,Z\notin Z_l))-\theta=0.
\end{align*}
Analogous to the proof in Appendix \ref{proofNeyman}, we have
\begin{align*}
\sum_{l=1}^L\mathbb{E}\left[\frac{(Y-\mu_0(D,X,Z\in Z_l)) 1(Z\in Z_l)}{p_l(D,X)}
-\frac{(Y-\mu_0(D,X,Z\notin Z_l)) 1(Z\notin Z_l)}{1-p_l(D,X)}\right]
=0.
\end{align*}
Hence, we have to show that
\begin{align*}
&\mathbb{E}\Bigg[\sum_{l=1}^L(\mu_0(D,X,Z\in Z_l)-\mu_0(D,X,Z\notin Z_l))\\
&\quad \left(\frac{(Y-\mu_0(D,X,Z\in Z_l)) 1(Z\in Z_l)}{p_l(D,X)}-\frac{(Y-\mu_0(D,X,Z\notin Z_l))1(Z\notin Z_l)}{1-p_l(D,X)}\right)\Bigg]=0,
\end{align*}
which holds by the same argument. 
Next, we show that Neyman orthogonality holds. First, note that 
\begin{align*}
&\quad\partial_r\mathbb{E}[\psi_2(W,\theta_0,\eta_0 + r(\eta-\eta_0)]\big|_{r=0}\\
&=\mathbb{E}\left[\partial_r\psi_2,(W,\theta_0,\eta_0 + r(\eta-\eta_0))\big|_{r=0}\right]\\
&=\sum_{l=1}^L\Bigg(\mathbb{E}\left[(\mu(D,X,Z\in Z_l)-\mu_0(D,X,Z\in Z_l))\right]-\mathbb{E}\left[(\mu(D,X,Z\notin Z_l)-\mu_0(D,X,Z\notin Z_l))\right]\\
&\quad-\mathbb{E}\left[\frac{1(Z\in Z_l)(\mu(D,X,Z\in Z_l)-\mu_0(D,X,Z\in Z_l))}{p_l(D,X)}\right]\\
&\quad+\mathbb{E}\left[\frac{1(Z\notin Z_l)(\mu(D,X,Z\notin Z_l)-\mu_0(D,X,Z\notin Z_l))}{1-p_l(D,X)}\right]\\
&\quad-\mathbb{E}\left[\frac{1(Z\in Z_l)(Y-\mu_0(D,X,Z\in Z_l))(p(D,X)-p_l(D,X))}{p_l(D,X)^2}\right]\\
&\quad-\mathbb{E}\left[\frac{1(Z\notin Z_l)(Y-\mu_0(D,X,Z\notin Z_l))(p(D,X)-p_l(D,X))}{(1-p_l(D,X))^2}\right]\Bigg)=0
\end{align*}
since
\begin{align*}
&\quad\sum_{l=1}^L\mathbb{E}\left[\frac{1(Z\in Z_l)(Y-\mu_0(D,X,Z\in Z_l))(p(D,X)-p_l(D,X))}{p_l(D,X)^2}\right]\\
&=\sum_{l=1}^L\mathbb{E}\left[\frac{1(Z\notin Z_l)(Y-\mu_0(D,X,Z\notin Z_l))(p(D,X)-p_l(D,X))}{(1-p_l(D,X))^2}\right]=0,
\end{align*}
by the same arguments used before,
\begin{align*}
&\quad\sum_{l=1}^L\mathbb{E}\left[\frac{1(Z\notin Z_l)(\mu(D,X,Z\notin Z_l)-\mu_0(D,X,Z\notin Z_l))}{1-p_l(D,X)}\right]\\
&=\sum_{l=1}^L\mathbb{E}\left[\mathbb{E}\left[\frac{1(Z\notin Z_l)(\mu(D,X,Z\notin Z_l)-\mu_0(D,X,Z\notin Z_l))}{1-p_l(D,X)}\Bigg|D,X\right]\right]\\
&=\sum_{l=1}^L\mathbb{E}\left[(\mu(D,X,Z\notin Z_l)-\mu_0(D,X,Z\notin Z_l))\mathbb{E}\left[\frac{1(Z\notin Z_l)}{1-p_l(D,X)}\Bigg|D,X\right]\right]\\
&=\sum_{l=1}^L\mathbb{E}\left[(\mu(D,X,Z\notin Z_l)-\mu_0(D,X,Z\notin Z_l))\right]
\end{align*}
by iterated expectation and
\begin{align*}
\quad&\sum_{l=1}^L\mathbb{E}\left[\frac{1(Z\in Z_l)(\mu(D,X,Z\in Z_l)-\mu_0(D,X,Z\in Z_l))}{p_l(D,X)}\right]\\
&=\sum_{l=1}^L\mathbb{E}\left[(\mu(D,X,Z\in Z_l)-\mu_0(D,X,Z\in Z_l))\right].
\end{align*}
Further, we have
{\scriptsize
\allowdisplaybreaks{
\begin{align*}
&\quad \partial_r \mathbb{E}[\psi_1(W, \theta_0, \eta_0 + r(\eta - \eta_0))]\big|_{r=0} \\
&= \mathbb{E}\left[\partial_r \psi_1(W, \theta_0, \eta_0 + r(\eta - \eta_0))\big|_{r=0}\right] \\
&= 2 \sum_{l=1}^L \mathbb{E}\Big[(\mu_0(D, X, Z \in Z_l) - \mu_0(D, X, Z \notin Z_l))\\
&\quad\cdot
\left((\mu(D, X, Z \in Z_l) - \mu_0(D, X, Z \in Z_l)) - (\mu(D, X, Z \notin Z_l) - \mu_0(D, X, Z \notin Z_l))\right)\Big] \\
&\quad - 2 \sum_{l=1}^L \mathbb{E}\left[(\mu_0(D, X, Z \in Z_l) - \mu_0(D, X, Z \notin Z_l)) 
\cdot \frac{1(Z \in Z_l)(\mu(D, X, Z \in Z_l) - \mu_0(D, X, Z \in Z_l))}{p_l(D, X)}\right] \\
&\quad + 2 \sum_{l=1}^L \mathbb{E}\left[(\mu_0(D, X, Z \in Z_l) - \mu_0(D, X, Z \notin Z_l)) 
\cdot \frac{1(Z \notin Z_l)(\mu(D, X, Z \notin Z_l) - \mu_0(D, X, Z \notin Z_l))}{1 - p_l(D, X)}\right] \\
&\quad - 2 \sum_{l=1}^L \mathbb{E}\left[(\mu_0(D, X, Z \in Z_l) - \mu_0(D, X, Z \notin Z_l)) 
\cdot \frac{1(Z \in Z_l)(Y - \mu_0(D, X, Z \in Z_l))(p(D, X) - p_l(D, X))}{p_l(D, X)^2}\right] \\
&\quad - 2 \sum_{l=1}^L \mathbb{E}\left[(\mu_0(D, X, Z \in Z_l) - \mu_0(D, X, Z \notin Z_l)) 
\cdot \frac{1(Z \notin Z_l)(Y - \mu_0(D, X, Z \notin Z_l))(p(D, X) - p_l(D, X))}{(1 - p_l(D, X))^2}\right] \\
&\quad + 2 \sum_{l=1}^L \mathbb{E}\left[(\mu_0(D, X, Z \in Z_l) - \mu_0(D, X, Z \notin Z_l)) 
\cdot \frac{1(Z \notin Z_l)(Y - \mu_0(D, X, Z \notin Z_l))(p(D, X) - p_l(D, X))}{(1 - p_l(D, X))^2}\right] \\
&\quad + 2 \sum_{l=1}^L \mathbb{E}\bigg[((\mu(D, X, Z \in Z_l) - \mu_0(D, X, Z \in Z_l)) - (\mu(D, X, Z \notin Z_l) - \mu_0(D, X, Z \notin Z_l))) \\
&\quad \left(\frac{(Y - \mu_0(D, X, Z \in Z_l)) \cdot 1(Z \in Z_l)}{p_l(D, X)} - \frac{(Y - \mu_0(D, X, Z \notin Z_l)) \cdot 1(Z \notin Z_l)}{1 - p_l(D, X)}\right)\bigg] \\
&= 2 \sum_{l=1}^L \mathbb{E}\Big[(\mu_0(D, X, Z \in Z_l) - \mu_0(D, X, Z \notin Z_l))\\
&\quad\cdot \left((\mu(D, X, Z \in Z_l) - \mu_0(D, X, Z \in Z_l)) - (\mu(D, X, Z \notin Z_l) - \mu_0(D, X, Z \notin Z_l))\right)\Big] \\
&\quad - 2 \sum_{l=1}^L \mathbb{E}\left[(\mu_0(D, X, Z \in Z_l) - \mu_0(D, X, Z \notin Z_l)) 
\cdot (\mu(D, X, Z \in Z_l) - \mu_0(D, X, Z \in Z_l))\right] \\
&\quad + 2 \sum_{l=1}^L \mathbb{E}\left[(\mu_0(D, X, Z \in Z_l) - \mu_0(D, X, Z \notin Z_l)) 
\cdot (\mu(D, X, Z \notin Z_l) - \mu_0(D, X, Z \notin Z_l))\right] \\
&= 0.
\end{align*}
}
}

\section{Proof of Theorem \ref{theorem1}}\label{proofth1}
To prove Theorem \ref{theorem1}, we apply Theorem 3.1 in \citet{doubleML}. Hence, we only need to show Assumption 3.1 and 3.2 in \citet{doubleML}. All bounds in the proof hold uniformly over $\mathbb{P}\in\mathcal{P}$ but we omit this qualifier for brevity. We use $C$ to denote a strictly positive constant that is independent of $n$ and $\mathbb{P}\in\mathcal{P}$. The value of $C$ may change at each appearance. In Appendix \ref{proofNeyman2} we have already shown that the moment condition and Neyman orthogonality is satisfied. Next, we note that the score in equation \eqref{score4} is linear, i.e.
$$\psi(W,\theta,\eta)=\psi^a(W,\eta)\theta+\psi^b(W,\eta),$$
with $\psi^a(W,\eta)=-1,$ which is in line with Assumption 3.1 in \cite{doubleML}. Next, we demonstrate the satisfaction of Assumption 3.2 to complete the proof. We define the following nuisance realization set $\mathcal{T}_n$ as the set of all P-square-integrable functions $\eta$ such that
\begin{align*}
\|\eta_{0}-\eta\|_{P,2q}&\le C,\\
\|\eta_{0}-\eta\|_{P,4}&\le \delta_N,\\
\|\eta_{0}-\eta\|_{P,2}&\le\delta_N^{1/2}N^{-1/4},
\end{align*}
for $\delta_N=o(1)$ and a constant $q>2$. Note that Assumption 3.2 (a)-(c) hold by construction of the set $\mathcal{T}_N$ and Assumption \ref{assnorm} due to same arguments as in \citet{huberkueck2022}. Assumption 3.2 (d) in \cite{doubleML} holds since
\begin{align*}
&\quad\mathbb{E}[\psi(W,\theta_0,\eta_0)^2]\\
&\ge \mathbb{E}\left[\left(\sum_{l=1}^L\frac{(Y-\mu_0(D,X,Z\in Z_l)) 1(Z\in Z_l)}{p_l(D,X)}
-\frac{(Y-\mu_0(D,X,Z\notin Z_l)) 1(Z\notin Z_l)}{1-p_l(D,X)}\right)^2\right]\\
&>0
\end{align*}
by Assumption $\ref{assnorm}$ and since $p_l(D,X)>c>0$ by construction for all $l=1,\dots,L$. This completes the proof.

\section{Proof of Theorem \ref{prop:all}}
\begin{proof}
Since $\lim\limits_{n\rightarrow\infty} P(\hat{S}=	\mathcal{S}) = 1$ we just have to show that
$$\lim\limits_{n\rightarrow\infty} P(\hat{V}\subseteq	\mathcal{V}) = 1. $$
If $\hat{\mathcal{P}}_{pass}=\emptyset$, our statement holds trivially. Hence, consider a variable $j: \mathcal{P}_j\in\hat{\mathcal{P}}_{pass}$, i.e, $Z_j\in\hat{S}$ and $Z_j\in\hat{\mathcal{V}}$. Therefore, it holds
$$|\sqrt{n}\hat{\sigma}^{-1}_j\hat{\theta}_j| < c_\alpha$$
by definition of $\hat{\mathcal{V}}$. By Corollary \ref{coro:test}, we have that
   $$\lim\limits_{n\rightarrow\infty}\, P(|\sqrt{n}\hat{\sigma}^{-1}_j\hat{\theta}_j|> c_\alpha) = 1 $$
if $j\notin\mathcal{V}$. This shows that $j\in\hat{\mathcal{V}}$ implies  $j\in{\mathcal{V}}$ which concludes the proof.
\end{proof}

\section{Proof of Theorem \ref{power}}
\begin{proof}
Consider any variable $Z_j\in \mathcal{P}^*$. We know that 
$E[Y|D,X]=E[Y|D,X,Z_j]$ and hence by Theorem \ref{theorem1},
$$P(\underbrace{|\sqrt{n}\hat{\sigma}_j^{-1}\hat{\theta}_j|>c_\alpha}_{:=A_j})= \alpha$$
if $n\rightarrow\infty$. 
We can conclude that
\begin{align*}
\lim\limits_{n\rightarrow\infty}P(\hat{\mathcal{P}}_{pass}\neq\emptyset)&=1-\lim\limits_{n\rightarrow\infty}P(\hat{\mathcal{P}}_{pass}=\emptyset)\\
&=1-\lim\limits_{n\rightarrow\infty}P(\cap_{j=1,\dots,M}A_j)\ge 1-\alpha
\end{align*}
with $M:=|\mathcal{P}^*|>0$. Understanding the stochastic dependence structure of the events $A_j$, $j=1,\dots,M$, could obviously lead to sharper bounds. 
\end{proof}
\clearpage

\begin{samepage}
\section{Pseudo-code}

\begin{algorithm}
\begin{footnotesize}
\caption{Testing procedure for selecting partition $\mathcal{P}^*$}\label{algo:procedure}
\rule{\linewidth}{0.4pt} 
\SetAlgoLined
\DontPrintSemicolon
\KwIn{$Y$ (Outcome), $D$ (Treatment), $Q$ (Potential controls and instruments)}
\BlankLine
\textbf{Select strong instruments $\hat{S}$}\\
\For{$j \in Q$}{
    Predict $D$ by $Q_j$ conditional on $Q\setminus Q_j$ using ML or regression: $\hat{F}_j\gets \hat{t}_j^2$\
    
    \If{$\hat{F}_j > C_{\tilde\alpha}$ with $\tilde{\alpha}=0.1/\log(n)$}{$j \in \hat{S}$}
}
\BlankLine
\textbf{Select instruments passing conditional independence test $\hat{\mathcal{V}}$}\ \\
\For{$j \in \hat{S}$}{
    $\hat{\mu}(D,X,Z) \gets$ and $\hat{p}(D,X) \gets$ ML estimates\\
    $\hat{\theta}_j \gets$ constructed via score in eq. \eqref{score3} (binary case) or \eqref{score4} (continuous case)\\
    $\hat{\sigma}_j \gets$ constructed as described in Theorem \ref{theorem1}\\
    $\hat{t}_{ind,j}= \frac{\hat{\theta}_j}{\hat{\sigma}_j} \gets$ test $H_0: \theta = 0$\\
    \If{$\hat{t}_{ind,j} < c_{\alpha}$}{
        $j \in \hat{\mathcal{V}}$\;
    }
}
\BlankLine
\textbf{Select final partition $\hat{\mathcal{P}}$}: $\hat{\mathcal{P}}_{pass} \gets \left\{\mathcal{P}_j: j \in \left(\hat{S} \cap \hat{\mathcal{V}}\right) \right\}$\\
\If{$\hat{\mathcal{P}}_{pass} = \emptyset$}{
    end and report that $H_0$ is not rejected (no identification)\;
}\ElseIf{$|\hat{\mathcal{P}}_{pass}| = 1$}{
    $\hat{\mathcal{P}} \gets \hat{\mathcal{P}}_{pass}$
}\ElseIf{$|\hat{\mathcal{P}}_{pass}| > 1$}{
    $\texttt{p}(\mathcal{P}_{j}) \gets \texttt{p-value}(\hat{t}_{ind,j})$ for $j \in \hat{\mathcal{V}}$\\\
    $\hat{\mathcal{P}}_{pmax} = \underset{\mathcal{P}_j \in \hat{\mathcal{P}}_{pass}}{\arg\max} \ p(\mathcal{P}_j)$\
}
\BlankLine
\textbf{If identification is confirmed ($\hat{\mathcal{P}}_{pass} \neq \emptyset$), estimate the treatment effect}\\
Regress: $Y \leftarrow D$ and $X=Q_{[j]} \in \hat{\mathcal{P}}_{pmax}$\;

\rule{\linewidth}{0.4pt} 
\end{footnotesize}
\end{algorithm}
\end{samepage}

\newpage
\section{Computational cost}\label{app:computationalcost}

In low-dimensional settings, we recommend using the FSHT procedure, which is computationally attractive. We agree that in high-dimensional settings the computational costs can be substantially larger. In the paper (see p.~8), we recommend using double machine learning (DML) to estimate the first-stage effect for each of the $p$ candidates when $p>n$.
A (weakly) smaller number of computations is performed for the conditional mean-independence test in the second stage because the number of candidate variables that pass the first-stage screening is typically smaller. Denoting by $s\le p$ the number of strong IV candidates, the overall computational order can be summarized as
\[
O\!\left(p\cdot C_{\mathrm{DML}} + s\cdot C_{\mathrm{test}}\right),
\]
where $C_{\mathrm{DML}}$ denotes the cost of one DML estimation for a single candidate variable (including $K$-fold cross-fitting and nuisance estimation), and $C_{\mathrm{test}}$ denotes the cost of one second-stage test evaluation.
In our simulations and application we use, for instance, random forests as learners for nuisance estimation. The computational cost of a random forest with $T$ trees typically scales (up to constants and implementation details) on the order of $O(T \cdot m \cdot n \log n)$, where $m$ is the number of features considered per split.
When $p$ is very large, it is crucial to use efficient methods for running the repeated first-stage estimations across the $p$ candidates. In this case, we recommend $L_2$-boosting as a computationally attractive alternative for nuisance estimation and screening, as discussed in Section~3.3 of \citet{KUECK2023714}. Using $L_2$-boosting primarily reduces the per-fit cost $C_{\mathrm{DML}}$ (and hence the overall runtime), while the outer loop over candidates remains. Importantly, the computations are embarrassingly parallel across candidate variables, so runtime can be substantially reduced via parallel computing.

\clearpage

\section{Simulation Details}\label{app:simulation}


This section provides a simulation study to investigate the finite sample behavior of our testing approach based on the following data generating process (DGP):
\begin{eqnarray*}
	Y &=& D + X'\beta + \gamma Z + W  + U,\\
	D &=& I\{X'\beta + Z + \delta W  + V >0\},\\
	X, Z &\sim& Bernoulli(\pi), \\
W&\sim& N(0,1),  U\sim N(0,1), V\sim N(0,1),
\end{eqnarray*}
with $X,Z,W,U,V$ being independent of each other. Outcome $Y$ is a linear function of $D$ (whose treatment effect is one), covariates $X$ (for $\beta\neq 0$), the unobservables $W$ and $U$, and the supposed instrument $Z$ if the coefficient $\gamma\neq 0$. The binary treatment $D$ is a function of $X$ and the unobservable $V$, as well as $W$ if coefficient $\delta\neq 0$. While the supposed IV $Z$ is binary, the unobserved terms $U,V,W$ are random, standard normally distributed variables that are independent of  each other, of $Z$, and of $X$. The joint set of covariates and IVs $Q$ is created as follows. We first draw a matrix of multivariate normal variables $\tilde{Q}\sim N(\mathbf{0}, \bm{\Sigma})$ where $\bm{\Sigma}$ is a covariance matrix with $Cov(\tilde{Q}_j,\tilde{Q}_k) = 0.5^{|j-k|}$. We then compute probabilities $\pi_i = \frac{1}{1+exp(-2\cdot\tilde{Q}_i)}$ and draw $Q_i\sim Bernoulli(\pi_i)$. We generate 10 variables in the set $Q=(X,Z)$, with the last one being the (single) IV $Z$. Concerning the first nine elements in $Q$, which are considered as covariates $X$, $\beta$  gauges their effects on $Y$ and $D$, respectively, and thus, the magnitude of confounding due to observables. The $j$th element in the coefficient vector $\beta$ is set to $0.8/j$ for $j=1,...,4$, implying a linear decay of covariate importance in terms of confounding for the first 4 covariates. Moreover, for elements $j=5,...,9$, the coefficients are equal to zero, implying that covariates 5 to 9 are random noise variables that are neither confounders, nor IVs.


In a second simulation design, we also consider continuous variables $Q$, which differ from the preceding DGP in that $X, Z \sim Uniform(-0.5, 0.5)$, i.e. the set of observed covariates $X$ and the IV $Z$ are now drawn from the multivariate uniform distribution, $X, Z \sim Uniform(-0.5, 0.5)$. All other properties of this DGP are identical to the previous scenario with binary $Z$ and $X$ variables. 

We investigate the performance of our testing approach in $200$ simulations with sample sizes of $n=1000$, $4000$, and $16000$. For estimating the (conditional) first stage effect of elements in $Q$ on $D$ to select sufficiently strong IVs, we use DML for partially linear models \citep{doubleML}. DML can be applied to both discrete and continuous elements in $Q$ and is implemented in the \textit{DoubleML} package for statistical software \textsf{R}. The so-called nuisance parameters in the first stage are estimated using the LASSO with cross-fitting, setting the number of folds to five (K = 5). Specifically, the classification LASSO is used for discrete elements in $Q$, modelling probabilities for treatment assignment while the LASSO is used for continuous variables. For testing the conditional independence of any selected candidate IV in the second stage, we consider a cross-fitted estimator $\hat\theta$ of eq. \eqref{eq:test1} with five folds ($K=5$), which is based on the doubly robust score functions in eqs. \eqref{score3} and \eqref{score4} for binary and continuous candidate IVs, respectively.
In the case of a continuous IV, its support is partitioned based on the quartiles of its distribution, such that $L=4$.  For nuisance parameter estimation in the second stage, LASSO with default parameters as implemented in the \textit{glmnet} package \citep{Friedmanetal2010} for the statistical software \textsf{R} is applied. We consider several statistics from our simulations: the estimated violation, $\hat\theta$, when using variable $Z$ as the IV, the standard deviation of the estimated violation (std), and the average of the standard error of the estimated violation across all simulation samples (mean se). 
These statistics are useful for judging the performance of the test conditional on selecting $Z$ as the IV, which is the correct choice if $\gamma=0$, while there is no valid IV if $\gamma \neq 0$ and no identification when controlling for covariates if $\delta \neq 0$. 

Since the choice of the IV is not fixed a priori but is part of the estimation process, we also consider three further statistics. The first one is the empirical selection rate of $Z$ (sel.Z), which indicates the frequency with which $Z$ is selected as the best candidate IV \textit{and also} has a \texttt{p}-value greater than 30\% when testing conditional independence based on the estimate $\hat\theta$.\footnote{We suggest the threshold of 30\% based on examination of the distribution of \texttt{p}-values for both true IVs and confounders, see Figures \ref{fig:pvals_oracle} and \ref{fig:pvals_conf}. This stricter threshold guards against the possible choice of a confounder as an IV, although a more traditional threshold of 10\% could also be used.} In other words, this corresponds to the proportion of simulations in which our method simultaneously selects $Z$ as the IV and at the same time keeps the null hypothesis of conditional independence. In scenarios where the null hypothesis holds ($\delta=\gamma=0$), this selection rate should approach one as $n$ increases. The second statistic is the analogous selection rate of the noise variables $X_5$ to $X_9$ (sel.noconf). It indicates the share of simulations in which one of the non-confounders (that are not IVs either) is selected as the best candidate IV and at the same time yields a \texttt{p}-value greater than 30\% when testing conditional independence when using this noise variable as the IV.
As $n$ increases, this selection rate should always approach zero, because asymptotically, noise variables do not have a first stage effect on the treatment conditional on other elements in $Q$. Therefore, noise variables should not be selected as candidate IVs for the conditional independence test. Finally, we report an equivalent selection rate for the confounders $X_1$ to $X_4$ (sel.conf). It corresponds to the share of simulations in which one of the confounders is selected as the best candidate IV and at the same time yields a \texttt{p}-value greater than 30\% when testing conditional independence when using this confounder as IV. As $n$ increases, this selection rate should always approach zero, because confounders are never valid IVs.
\begin{table}[h]
\begin{center}
	\caption{Simulations: Single Instrument}
	\label{tab:sim}
    \begin{footnotesize}
	\begin{tabular}{c|ccc |c c c }
 \hline\hline
  \multicolumn{7}{l}{\textbf{PANEL A: Binary Instrument}} \\
  \hline
 N & sel.Z & sel.noconf & sel.conf & $\hat{\theta}$ & std & mean se\\  
  \hline
    &\multicolumn{6}{c}{Assumptions \ref{ass1} and \ref{ass2} hold ($\delta=0$, $\gamma=0$)} \\
   \hline
1000 & 32\% & 2\% & 59\% & 0.002 & 0.010 & 0.008 \\ 
  4000 & 45\% & 1\% & 42\% & 0.000 & 0.002 & 0.002 \\ 
  16000 & 68\% & 2\% & 4\% & 0.000 & 0.000 & 0.000 \\ 
   \hline
   &\multicolumn{6}{c}{A\ \ref{ass1} violated, A\ \ref{ass2} holds ($\delta=2$, $\gamma=0$)} \\
   \hline
  1000 & 50\% & 3\% & 32\% & 0.012 & 0.012 & 0.016 \\ 
  4000 & 15\% & 4\% & 44\% & 0.012 & 0.006 & 0.009 \\ 
  16000 & 0\% & 4\% & 20\% & 0.013 & 0.003 & 0.005 \\
 \hline
  &\multicolumn{6}{c}{A\ \ref{ass1} holds, A\ \ref{ass2} violated ($\delta=0$, $\gamma=0.5$)} \\
   \hline
  1000 & 30\% & 2\% & 57\% & 0.045 & 0.040 & 0.050 \\ 
  4000 & 3\% & 2\% & 68\% & 0.053 & 0.019 & 0.027 \\ 
  16000 & 0\% & 3\% & 8\% & 0.054 & 0.010 & 0.014 \\ 
   \hline
     \multicolumn{7}{l}{\textbf{PANEL B: Continuous Instrument}} \\
     \hline
 N & sel.Z & sel.noconf & sel.conf & $\hat{\theta}$ & std & mean se\\  
  \hline
    &\multicolumn{6}{c}{Assumptions \ref{ass1} and \ref{ass2} hold ($\delta=0$, $\gamma=0$)} \\
   \hline
1000 & 55\% & 2\% & 19\% & -0.007 & 0.029 & 0.031 \\ 
  4000 & 65\% & 4\% & 8\% & -0.001 & 0.015 & 0.016 \\ 
  16000 & 64\% & 1\% & 0\% & 0.000 & 0.007 & 0.009 \\  
   \hline
   &\multicolumn{6}{c}{A\ \ref{ass1} violated, A\ \ref{ass2} holds ($\delta=2$, $\gamma=0$)} \\
   \hline
  1000 & 16\% & 6\% & 6\% & -0.018 & 0.031 & 0.017 \\ 
  4000 & 8\% & 2\% & 1\% & -0.015 & 0.018 & 0.007 \\ 
  16000 & 0\% & 2\% & 0\% & -0.030 & 0.014 & 0.003 \\ 
 \hline
  &\multicolumn{6}{c}{A\ \ref{ass1} holds, A\ \ref{ass2} violated ($\delta=0$, $\gamma=0.5$)} \\
   \hline
  1000 & 0\% & 4\% & 18\% & -0.490 & 0.192 & 0.049 \\ 
  4000 & 0\% & 6\% & 8\% & -0.552 & 0.103 & 0.026 \\ 
  16000 & 0\% & 1\% & 0\% & -0.554 & 0.055 & 0.013 \\ 
   \hline
\end{tabular}
     \end{footnotesize}
\end{center}
\par
{\scriptsize Notes: columns `$\hat{\theta}$', `std', and  `mean se' provide the average estimate of $\theta$ (the violation of conditional independence) conditional on using $Z$ as IV, its standard deviation, and the average of the standard errors across all samples, respectively. `sel.Z' gives the share of simulations in which $Z$ is selected as best candidate IV and simultaneously yields a \texttt{p}-value $> 0.30$ when testing conditional independence based on the estimate of $\hat{\theta}$. `sel.noconf' is an equivalent measure for covariates $X_5$ to $X_9$, corresponding to the share of selecting a non-confounder as best candidate IV and having a \texttt{p}-value $>$ 30\% when testing conditional independence. `sel.conf' is an equivalent measure for covariates $X_1$ to $X_4$, corresponding to the share of selecting an observable confounder the best candidate IV and having a \texttt{p}-value $>$ 30\% when testing conditional independence. 
}
\end{table}

Table \ref{tab:sim} reports the simulation results for binary variables in $Q$ in Panel A, and for continuous elements in $Q$ in Panel B, respectively. 
In both panels, the top rows provide the results for $\delta=\gamma=0$, implying that both Assumptions \ref{ass1} and \ref{ass2} are satisfied and conditional independence holds. As $n$ increases, the test is more likely to simultaneously select the true IV $Z$ and keep the null hypothesis of conditional independence. Specifically, the selection rate of $Z$ (sel.Z) increases from 32\% with $n=1000$ observations to 68\% with $n=16000$ observations for the binary IV, and from 55\% with $n=1000$ to 64\% with $n=16000$ for the continuous IV. This improvement is mirrored by a reduction in the rate at which a confounder variable ($X_1...X_4$) is selected as the IV and passes the test (sel.conf). Under $n=16000$, it only occurs that any confounder is selected as the best IV and passes the conditional independence test 4\% of the time for the binary and never for the continuous scenario. However, we observe that the noise variables ($X_5...X_9$) sometimes appear to step in, as their selection rate (sel.noconf) transiently increases in response to the decreasing selection rate of confounders. Additionally, we observe that conditional on having selected the true IV, the estimated violation $\hat\theta$ 
is close to zero for any sample size and never statistically significant at conventional significance levels. Figure \ref{fig:sim_estimates} depicts the distribution of the estimated $\hat\theta$ and visually confirms this finding in the first column. The top row of density plots show the binary $\hat\theta$ and the bottom row shows the continuous estimates. As sample size increases, the distribution moves closer to zero, as expected.  

The intermediate rows of Table \ref{tab:sim} and middle column of Figure \ref{fig:sim_estimates} present the results for a violation of SOO when setting $\delta=2$, $\gamma=0$. As $n$ increases, the selection rates of $Z$ (sel.Z) and the confounder variables (sel.conf) both go towards zero, as expected. The rate of finding a nonempty set $\hat{\mathcal{P}}_{pass}$, which is equal to the sum of the selection rates sel.Z, sel.noconf and sel.conf, also goes towards zero, finding no identification as expected. Furthermore, the estimated violation conditional on using $Z$ as an IV ($\hat{\theta}$) is statistically different from zero under the larger sample sizes $n=4000,16000$. We note that in settings where A \ref{ass1} is violated, the selection rate of a confounder (sel.conf) in the binary scenario is still slightly high, around 20\% in the largest sample for the binary IV, athough this rate is 0\% for the continuous IV.\footnote{In further experiments (not reported here), strengthening the degree of confounding by increasing the value of $\gamma$ leads to a substantial reduction in the rate at which a confounder is selected.}  The lower rows of Table \ref{tab:sim} and final column of Figure \ref{fig:sim_estimates} provide the performance measures under a violation of A \ref{ass2}, IV validity, when considering $\delta=0$, $\gamma=0.5$. Again and as expected, both the selection rates of $Z$ (sel.Z) and the confounder variables (sel.conf) tend to zero as $n$ increases. At the same time, the estimated violation conditional on using $Z$ as an IV ($\hat{\theta}$) is large in absolute terms and highly statistically significant across all sample sizes, and $\hat{\mathcal{P}}_{pass}$ is empty 89\% of the time for binary and 99\% of the time for continuous IVs for $n = 16000$.


We now consider a modification of the previous DGP that entails the existence of multiple IVs. Specifically, the true set of IVs $(Z_1...Z_3)$ now comprises the last three elements in $Q$, i.e. variables 8 to 10, which are constructed equivalently to $Z$ in the previous setup. We investigate the performance when A \ref{ass2} is violated for all of the IVs. This implies that the set of confounders remains the same as before  ($X_1...X_4$), while the set of noise variables now only consists of three elements ($X_5...X_7$). Results for the multiple IV simulations are shown in Table \ref{tab:sim2}, and are similar to those shown in the single IV results. In the case of multiple binary IVs and no violations of the assumptions, the selection rate of $Z$ (sel.Z) is up to 96\% when $n=16000$. As in the single IV scenario, under a violation of A \ref{ass1}, SOO, the selection rates of the true IVs (sel.Z) and confounding variables (sel.conf) go to zero. For multiple continuous IVs (Panel B), results are consistent with the single IV setting, with the exception of the slightly lower increase of the IV selection rate in the larger sample size compared to the binary version (67\% in the case of multiple IVs, compared to 64\% under the single IV scenario). However, in all settings of the multiple continuous IV case, the rate of selecting a confounder (sel.conf) goes to zero and the estimate magnitudes and $\hat{\mathcal{P}}_{pass}$ behave as expected, moving towards zero and the empty set under no violation of the assumptions and away from zero otherwise. 

\begin{table}[h]
\begin{center}
	\caption{Simulations: Multiple Instruments}
	\label{tab:sim2}
    \begin{footnotesize}
	\begin{tabular}{c|ccc |c c c  }
 \hline\hline
  \multicolumn{7}{l}{\textbf{PANEL A: Binary Instruments}} \\
  \hline
 N & sel.Z & sel.noconf & sel.conf & $\hat{\theta}$ & std & mean se\\  
  \hline
    &\multicolumn{5}{c}{Assumptions \ref{ass1} and \ref{ass2} hold ($\delta=0$, $\gamma=0$)} \\
   \hline
1000 & 80\% & 1\% & 18\% & 0.002 & 0.008 & 0.007 \\ 
  4000 & 84\% & 2\% & 12\% & 0.000 & 0.002 & 0.002 \\ 
  16000 & 96\% & 0\% & 0\% & 0.000 & 0.000 & 0.000 \\ 
   \hline
   &\multicolumn{5}{c}{A\ \ref{ass1} violated, A\ \ref{ass2} holds ($\delta=2$, $\gamma=0$)} \\
   \hline
 1000 & 86\% & 1\% & 11\% & 0.010 & 0.012 & 0.014 \\ 
  4000 & 40\% & 2\% & 40\% & 0.010 & 0.005 & 0.007 \\ 
  16000 & 0\% & 3\% & 16\% & 0.011 & 0.003 & 0.004 \\
   \hline
     &\multicolumn{5}{c}{A\ \ref{ass1} holds, A\ \ref{ass2} violated ($\delta=0$, $\gamma=0.5$ $\forall Z$)} \\
   \hline
  1000 & 70\% & 0\% & 21\% & 0.053 & 0.035 & 0.050 \\ 
  4000 & 15\% & 2\% & 48\% & 0.058 & 0.018 & 0.026 \\ 
  16000 & 0\% & 1\% & 8\% & 0.058 & 0.009 & 0.013 \\
  \hline
\multicolumn{7}{l}{\textbf{PANEL B: Continuous Instruments}} \\
\hline
 N & sel.Z & sel.noconf & sel.conf & $\hat{\theta}$ & std & mean se\\  
  \hline
    &\multicolumn{5}{c}{Assumptions \ref{ass1} and \ref{ass2} hold ($\delta=0$, $\gamma=0$)} \\
   \hline
1000 & 62\% & 1\% & 6\% & -0.010 & 0.035 & 0.031 \\ 
  4000 & 64\% & 1\% & 4\% & -0.003 & 0.017 & 0.016 \\ 
  16000 & 67\% & 2\% & 0\% & -0.001 & 0.010 & 0.008 \\ 
   \hline
   &\multicolumn{5}{c}{A\ \ref{ass1} violated, A\ \ref{ass2} holds ($\delta=2$, $\gamma=0$)} \\
   \hline
  1000 & 38\% & 2\% & 2\% & -0.027 & 0.036 & 0.017 \\ 
  4000 & 14\% & 0\% & 1\% & -0.025 & 0.021 & 0.007 \\ 
  16000 & 0\% & 0\% & 0\% & -0.040 & 0.007 & 0.003 \\
 \hline
     &\multicolumn{5}{c}{A\ \ref{ass1} holds, A\ \ref{ass2} violated ($\delta=0$, $\gamma=0.5$ $\forall Z$)} \\
   \hline
  1000 & 20\% & 2\% & 10\% & -0.116 & 0.090 & 0.037 \\ 
  4000 & 0\% & 2\% & 6\% & -0.142 & 0.064 & 0.019 \\ 
  16000 & 0\% & 2\% & 0\% & -0.140 & 0.017 & 0.010 \\ 
   \hline
\end{tabular}
\end{footnotesize}
\end{center}
\par
{\scriptsize Notes: columns `$\hat{\theta}$', `std', and  `mean se' provide the average estimate of $\theta$ (the violation of conditional independence) conditional on using $Z_3$ (the last element in $Q$) as IV, its standard deviation, and the average of the standard errors across all samples, respectively. `sel.Z' gives the share of simulations in which one of the IVs $Z_1$ to $Z_3$ is selected as best candidate IV and simultaneously yields a \texttt{p}-value $> 0.30$ when testing conditional independence based on the estimate of $\hat{\theta}$. `sel.noconf' is an equivalent measure for covariates $X_5$ to $X_7$, corresponding to the share of selecting a non-confounder as best candidate IV and having a \texttt{p}-value $>$ 0.3\% when testing conditional independence. `sel.conf' is an equivalent measure for covariates $X_1$ to $X_4$, corresponding to the share of selecting an observable confounder the best candidate IV and having a \texttt{p}-value $>$ 0.3 when testing conditional independence. 
}
\end{table}

\setcounter{table}{0}
\renewcommand{\thetable}{A\arabic{table}}

\setcounter{figure}{0}
\renewcommand{\thefigure}{A\arabic{figure}}
\clearpage
\section{Figures}

\begin{figure}[h!]
\centering
\captionsetup{labelformat=empty}
\caption{Simulations: Distribution of Estimates}
\label{fig:sim_estimates}

\begin{minipage}{0.33\textwidth}
    \centering
    \includegraphics[width=\linewidth, keepaspectratio]{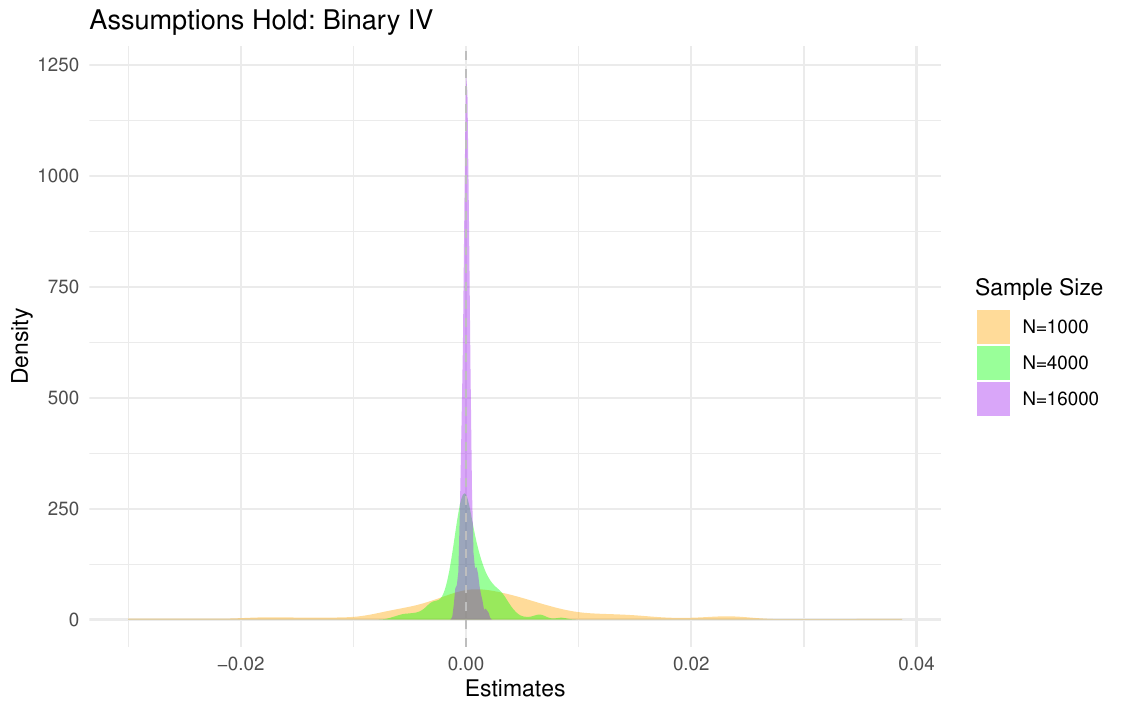}
\end{minipage}%
\hfill
\begin{minipage}{0.33\textwidth}
    \centering
    \includegraphics[width=\linewidth, keepaspectratio]{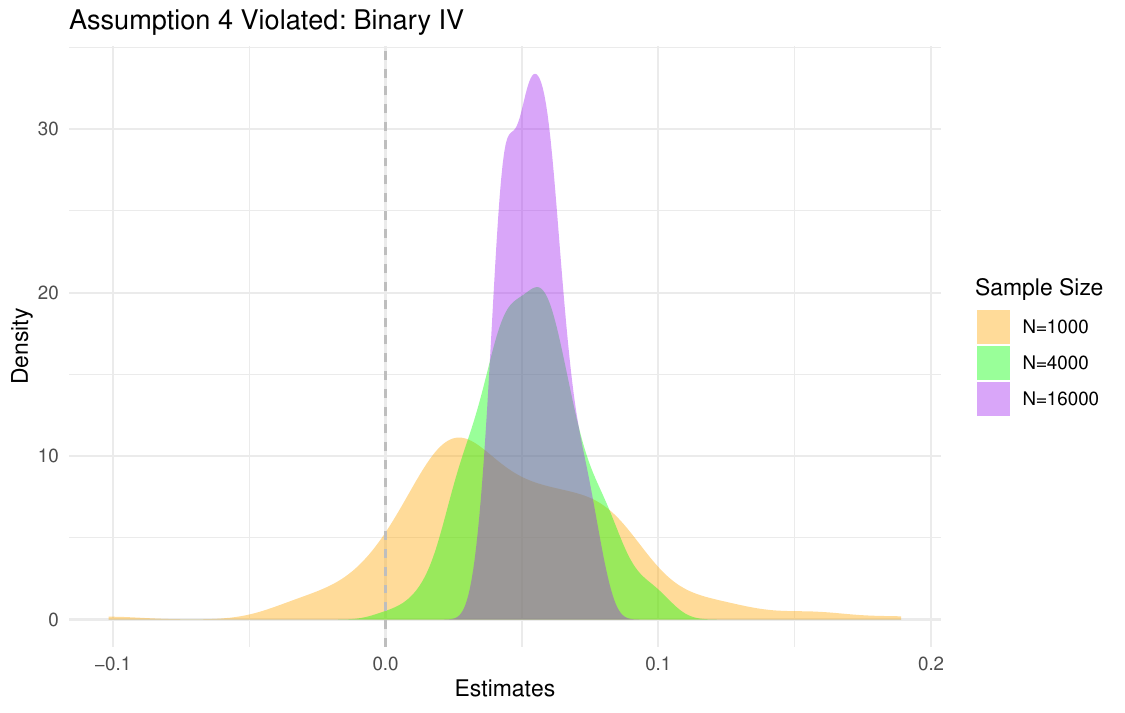}
\end{minipage}%
\hfill
\begin{minipage}{0.33\textwidth}
    \centering
    \includegraphics[width=\linewidth, keepaspectratio]{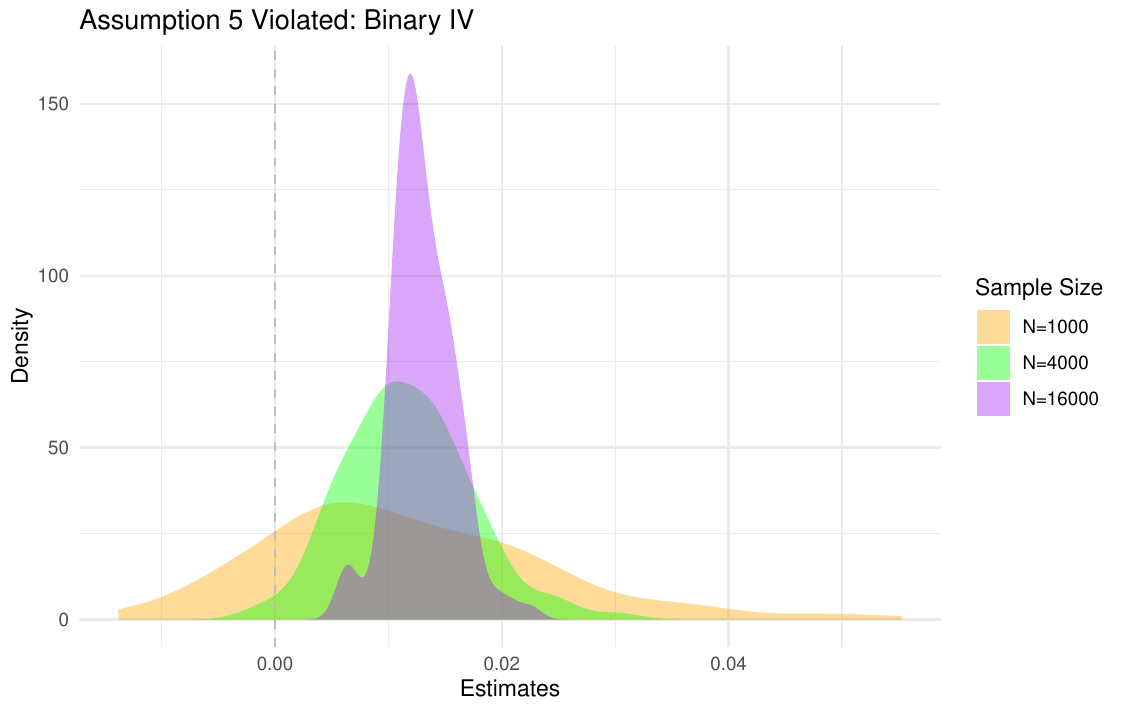}
\end{minipage}

\vspace{0.3cm} 

\begin{minipage}{0.33\textwidth}
    \centering
    \includegraphics[width=\linewidth, keepaspectratio]{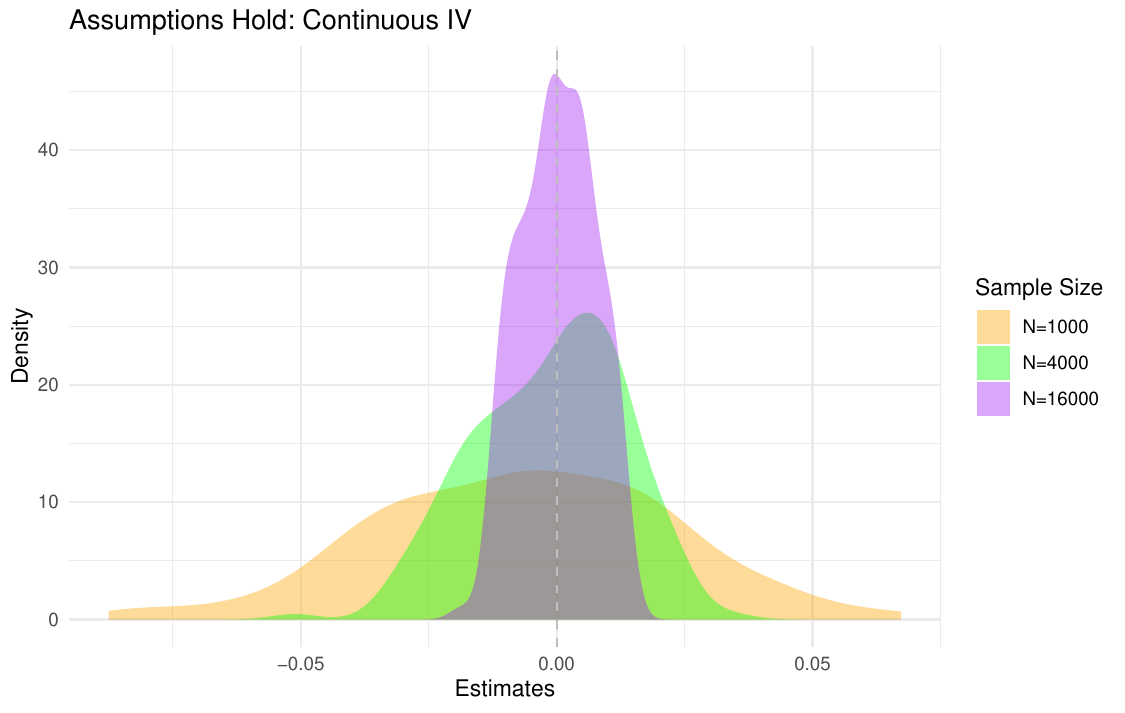}
\end{minipage}%
\hfill
\begin{minipage}{0.33\textwidth}
    \centering
    \includegraphics[width=\linewidth, keepaspectratio]{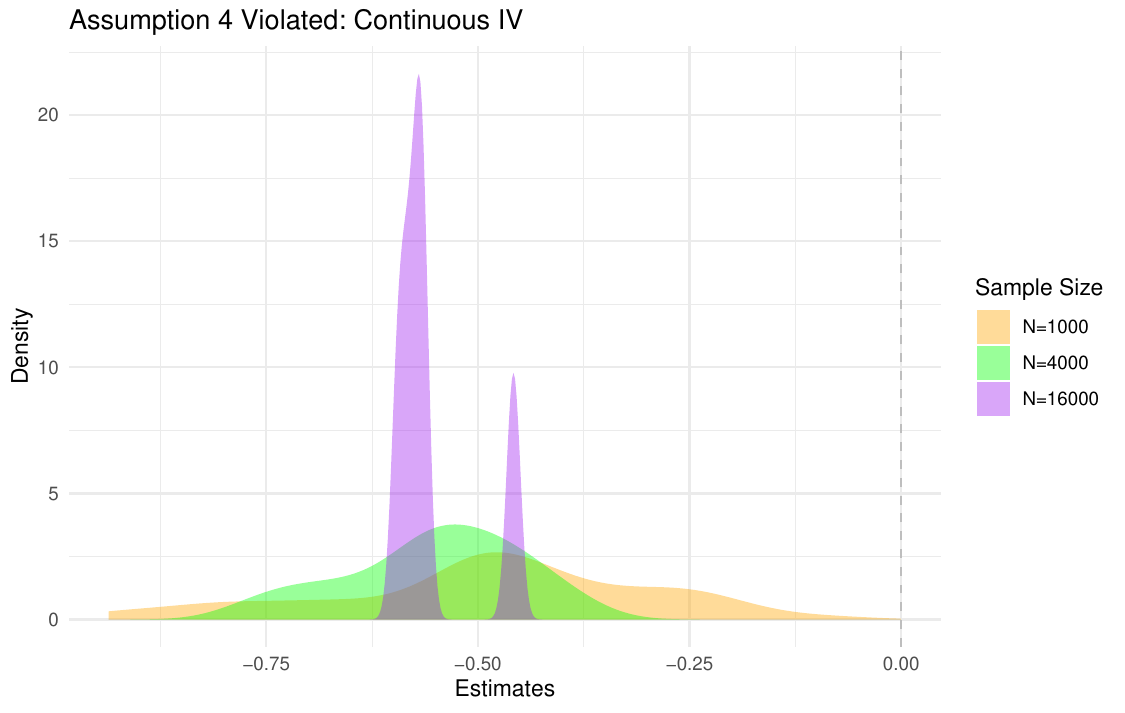}
\end{minipage}%
\hfill
\begin{minipage}{0.33\textwidth}
    \centering
    \includegraphics[width=\linewidth, keepaspectratio]{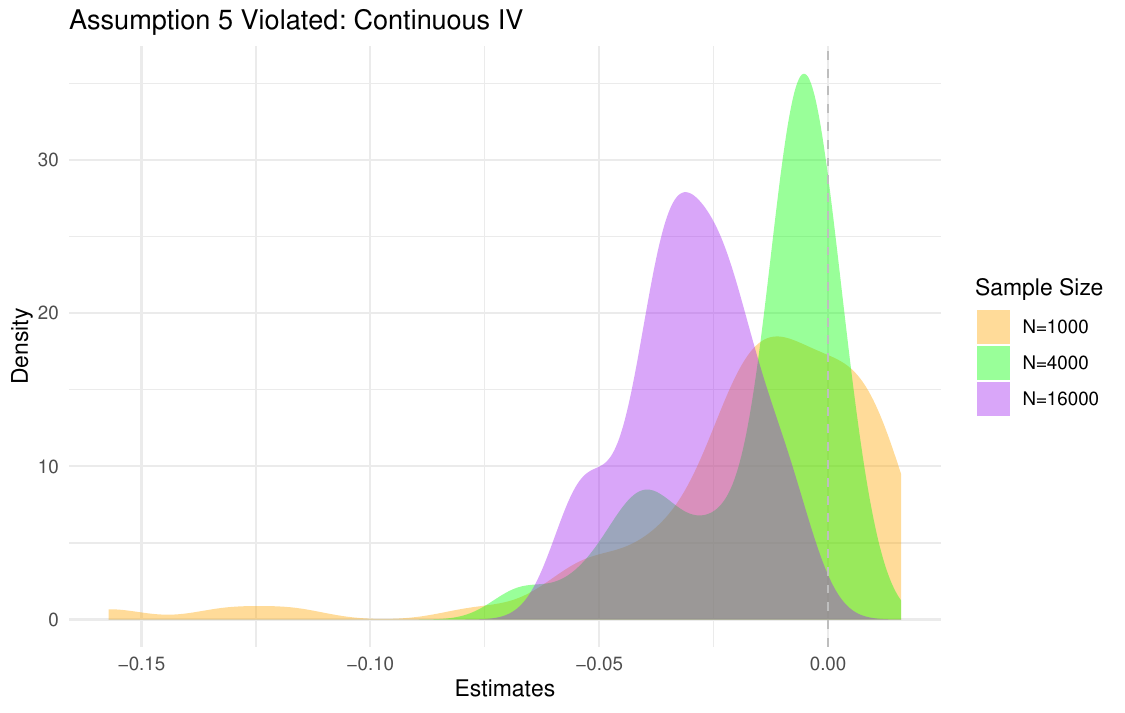}
\end{minipage}

\caption*{\footnotesize{This figure depicts density plots of the estimated violation $\hat{\theta}$ for the true IV (covariate $X_{10}$), across the simulation settings. The top and bottom rows are the binary and continuous IV cases, respectively. The first plots on the left provide results for $\delta=\gamma=0$, i.e. Assumptions \ref{ass1} and \ref{ass2} are satisfied. The middle plots depict the scenario of A \ref{ass1} being violated, and the right-most plots A \ref{ass2} is violated. Orange is the smallest sample size (N = 1000), green the medium sample size (N=4000), and purple is the largest (N=16000).}}
\label{densityplot}
\end{figure}

\begin{figure}
    \centering
    \caption{Distribution of \texttt{p}-values: Oracle IV, Binary Simulations}
    \includegraphics[width=0.85\linewidth]{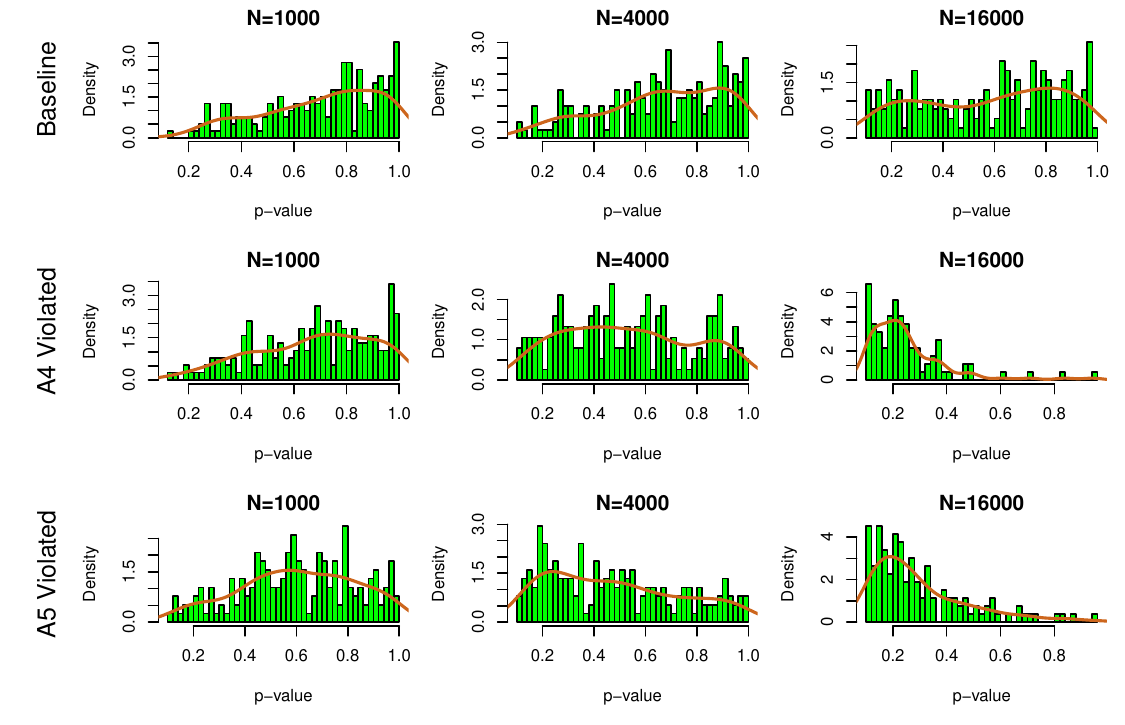}
    \label{fig:pvals_oracle}
\end{figure}

\begin{figure}
    \centering
    \caption{Distribution of \texttt{p}-values: Confounder $X_1$, Binary Simulations}
    \includegraphics[width=0.85\linewidth]{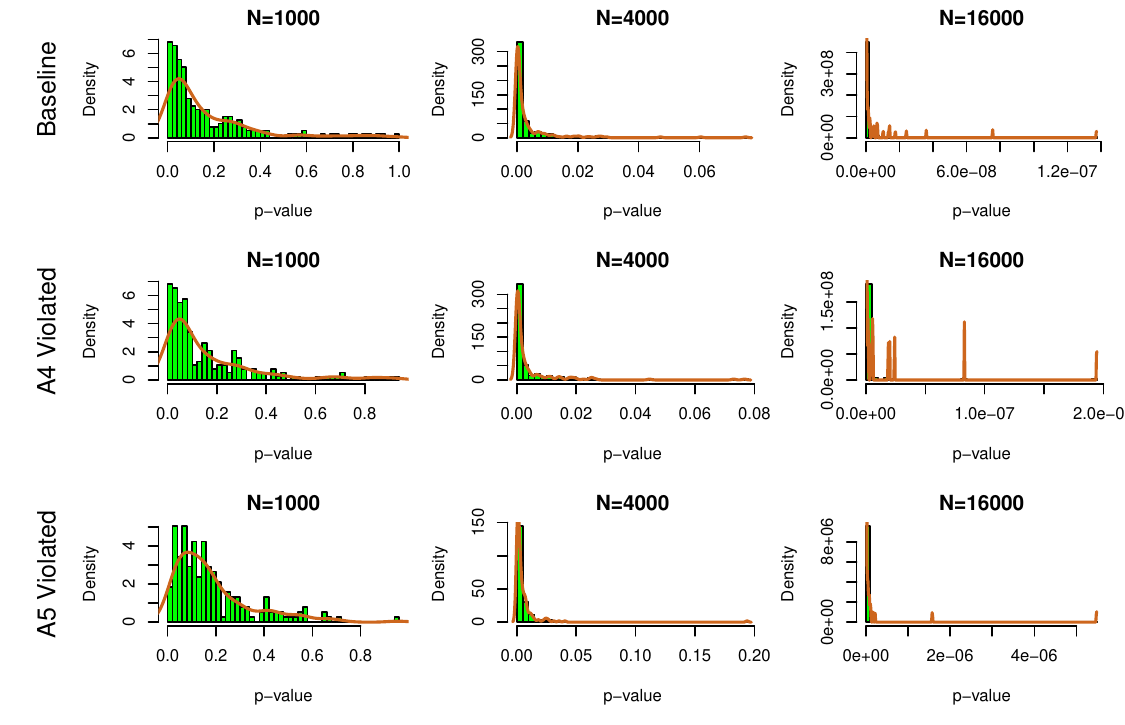}
    \label{fig:pvals_conf}
\end{figure}

\newpage
\section{Tables}

\begin{table}[ht]
\centering
\caption{Application: All $\hat{S}$}
\label{app_fullests}
\begin{tabular}{rrrrr}
  \hline
  \multicolumn{5}{l}{\textbf{Panel A: Number of Prescriptions}} \\
 Candidate IV & est & se & pval & pct trimmed \\ 
  \hline
  HH Income & 0.003 & 0.038 & 0.929 & 0.386 \\ 
  Random Assignment Instrument & -0.007 & 0.010 & 0.472 & 0.001 \\ 
  Living with a partner & 0.013 & 0.009 & 0.163 & 0.335 \\ 
  Ever Surveyed & 0.013 & 0.009 & 0.143 & 0.294 \\ 
  Any RX & -0.019 & 0.011 & 0.078 & 0.391 \\ 
  Need Dental Assistance Missing & 0.060 & 0.032 & 0.063 & 0.905 \\ 
  Application Received & 0.024 & 0.013 & 0.061 & 0.381 \\ 
  Employer Insurance & 0.025 & 0.013 & 0.058 & 0.551 \\ 
  Application Approved & 0.040 & 0.015 & 0.007 & 0.681 \\ 
  Surveyed All Months & -0.540 & 0.176 & 0.002 & 0.771 \\ 
  Retired & 0.109 & 0.031 & 0.000 & 0.572 \\ 
  Application Pending & 0.057 & 0.016 & 0.000 & 0.703 \\ 
  All Survey End & 0.027 & 0.007 & 0.000 & 0.325 \\ 
  OHP Insurance Missing & 0.037 & 0.009 & 0.000 & 0.410 \\ 
  Any Hospitalisation & 0.057 & 0.014 & 0.000 & 0.656 \\ \\ 
   \hline
    \multicolumn{5}{l}{\textbf{Panel B: Number of Doctor Visits}} \\
Random Assignment Instrument & -0.000 & 0.013 & 0.999 & 0.002 \\ 
  Any Doctor Visit & -0.002 & 0.011 & 0.871 & 0.344 \\ 
  Application Approved & 0.004 & 0.021 & 0.867 & 0.694 \\ 
  Retired & -0.027 & 0.073 & 0.705 & 0.601 \\ 
  Application Received & 0.010 & 0.015 & 0.496 & 0.398 \\ 
  Need Dental Work Missing & 0.039 & 0.052 & 0.447 & 0.917 \\  
  Ever Surveyed & 0.011 & 0.011 & 0.287 & 0.315 \\ 
  Surveyed in All Months & -0.238 & 0.217 & 0.274 & 0.785 \\ 
  Under 19 All Insured & 0.026 & 0.016 & 0.111 & 0.475 \\ 
  Ending Survey & 0.023 & 0.012 & 0.064 & 0.343 \\ 
  Application Pending & 0.051 & 0.024 & 0.036 & 0.720 \\ 
  HH Income Category & 0.112 & 0.053 & 0.036 & 0.402 \\ 
  Any Hospitalisation & 0.064 & 0.025 & 0.012 & 0.669 \\ 
  More than 30 hours employed & 0.044 & 0.011 & 0.000 & 0.350 \\ 
  Insurance Missing Variable & 0.053 & 0.011 & 0.000 & 0.424 \\ 
  OOP Prescription Costs & -0.071 & 0.002 & 0.000 & 0.436 \\ 
  OOP Medical Costs & -0.060 & 0.001 & 0.000 & 0.429 \\ \\ 
   \hline
\end{tabular}
\label{tab:application_allcovar}
\end{table}

\end{document}